\newcommand*{\QEDA}{\hfill\ensuremath{\blacksquare}}  
\newtheorem*{proof}{{\textit{Proof:}}}
\newtheorem{lemma}{{\textit{\textbf{Lemma}}}}
\newtheorem{theorem}{{\textit{\textbf{Theorem}}}}
\begin{document}
\title{Mutual Information Analysis for Factor Graph-based MIMO Iterative Detections through Error Functions}

\author{
\IEEEauthorblockA{Huan Li, Jingxuan Huang, and Zesong Fei
\thanks{This work has been submitted to the IEEE for possible publication. Copyright may be transferred without notice, after which this version may no longer be accessible.}
\thanks{H. Li, J. Huang and Z. Fei are with the School of Information and Electronics, Beijing Institute of Technology, Beijing 100081, China (e-mail: huan\_lilh@outlook.com and 3120200780@bit.edu.cn, jxhbit@gmail.com, feizesong@bit.edu.cn.).}\thanks{}
}}



\maketitle

\begin{abstract}
	The factor graph (FG) based iterative detection is considered an effective and practical method for multiple-input and multiple-out (MIMO), particularly massive MIMO (m-MIMO) systems. However, the convergence analysis for the FG-based iterative MIMO detection is insufficient, which is of great significance to the performance evaluation and algorithm design of detection methods. This paper investigates the mutual information update flow for the FG-based iterative MIMO detection and proposes a precise mutual information computation mechanism with the aid of Gaussian approximation and error functions, i.e., the error functions-aided analysis (EF-AA) mechanism. Numerical results indicate that the theoretical result calculated by the EF-AA mechanism is completely consistent with the bit error rate performance of the FG-based iterative MIMO detection. Furthermore, the proposed EF-AA mechanism can reveal the exact convergent iteration number and convergent signal-to-ratio value of the FG-based iterative MIMO detection, representing the performance bound of the MIMO detection.
\end{abstract}
\IEEEpeerreviewmaketitle

\begin{IEEEkeywords}
	 Mutual information, convergence, error functions, Gaussian approximation, MIMO detection, factor graph 
\end{IEEEkeywords}

\section{Introduction}
Ultra-high speed and ultra-reliable wireless transmissions are requested in the sixth-generation (6G) technology to provide ubiquitous high-performance connections \cite{1:6G}. Therefore, multiple-input and multiple-output (MIMO) technology \cite{2:MIMO} is significant to the 6G implementation, which can guarantee increased data throughput with accurate detection methods. Specifically, massive MIMO (m-MIMO) technology \cite{3:m-MIMO} can provide an extremely high data transmission rate due to more transceiver antennas and spatial diversity. 

However, most detection algorithms are not feasible for m-MIMO technology because of the inevitable high complexity \cite{4:ML,5:LR,6:LRa}. To solve this problem, the authors in \cite{7:FGMP} proposed to utilize the factor graph (FG) model to transfer probability information between observation nodes (ONs) and variable nodes (VNs) to estimate accurate symbol probabilities. S. Wu et al. in \cite{8:FGEP} showed that FG-based iterative MIMO detection could achieve near-optimal performance compared to the optimal maximum likelihood detection \cite{4:ML}, with a complexity that is acceptable and quadratic to the number of transceiver antennas \cite{9:FGEPCMP}. Consequently, the FG-based iterative MIMO detection can be considered a practical method for m-MIMO technology.

Specifically, the performance of MIMO detections can be influenced by many factors, for example, the property of MIMO channels. Many prior arts have analyzed the MIMO channel through mutual information \cite{10:UNCSI,11:IIDMIMO,12:CORREI,13:mMIMOAna,14:MATADC}. In \cite{10:UNCSI}, O. Oyman et al. derived a tight lower-bounded analytical expression for a Gaussian MIMO frequency-selective spatially correlated fading channel with unknown channel state information (CSI), which approximated the variance of mutual information to a closed-form function. For the space-time independent and identically distributed (i.i.d.) MIMO channel, the authors in \cite{11:IIDMIMO} proposed analytical expressions to present the distribution characters of mutual information between transmitted and received signal vectors. L. Musavian et al. proved in \cite{12:CORREI} that a tight gap between the upper and lower bounds of mutual information exists when uncorrelated transmit antennas with uniform power distribution and correlated receive antennas are considered. For the m-MIMO systems, P. Yang et al. have developed a message-passing-based algorithm to compute the mutual information where arranged finite-alphabet inputs \cite{13:mMIMOAna}. Recently, random matrix theory has been utilized to attain the ergodic mutual information between the transmit signals and outputs of the Rayleigh channel \cite{14:MATADC}, which is quantized by a mixed analog-to-digital converters architecture. 

Based on the mutual information investigations of different MIMO channels, much literature has studied the performance of varieties of MIMO detections from the perspective of mutual information. For the serial detection schemes in vertical Bell labs layered space-time (V-BLAST) MIMO architecture \cite{15:BLAST}, S. Stiglmayr et al. calculated the mutual information of a single antenna stream without considering the correlation between the Gaussian noise and hard decision error \cite{16:VBLAST}. In \cite{17:STAIR}, the authors proposed a numerical calculation of the mutual information to show the convergence of an iterative method, which utilized the stair matrix to achieve similar performance to linear minimum mean-square error (MMSE) detection. Particularly for the belief propagation-based iterative detections, \cite{18:EXITMIMO} employed the extrinsic information transfer (EXIT) chart to present the validity and convergence of the iterative process. Then, we initially derived the closed-form mutual information update flow for the FG-based iterative MIMO detection according to the EXIT analysis in \cite{19:EXITMIMO2}. Nevertheless, the proposed mutual information calculations in \cite{19:EXITMIMO2} ignored the specific probability when computing the mutual information between VN and the transmitted symbol. The existing approximation of mutual information causes inaccuracy in evaluating the convergence and bit error rate (BER) performance.

Therefore, we derive the closed-form expressions of assessing the convergence and BER performance of the FG-based iterative MIMO detection, which is more feasible for the MIMO, especially the m-MIMO system. In this paper, we proposed a more precise calculation method for the mutual information update flow of the FG-based iterative MIMO detection under both binary phase shift keying (BPSK) and quadrature phase shift keying (QPSK) modulations, where the error function $\mathit{erf}(\cdot)$ and the complementary error function $\mathit{erfc}(\cdot)$ are utilized. Our proposed error functions-aided mechanism can provide exact mutual information curves at different SNRs in both MIMO and m-MIMO systems, which are of great significance to theoretical bound evaluation and the detection algorithm designs. 

Noted that this work is distinct from the aforementioned studies \cite{10:UNCSI,11:IIDMIMO,12:CORREI,13:mMIMOAna,14:MATADC}, which focuses on precisely investigating the performance of FG-based iterative MIMO detections, instead of MIMO channel properties. The rest of this paper is organized as follows. In Section II, the system model and fundamental knowledge of FG-based iterative MIMO detections are introduced. In Section III, we derived the proposed error functions-aided mutual information calculation mechanism for FG-based iterative MIMO detections. The numerical results of mutual information analysis and the BER performance are presented in Section IV. Finally, we conclude this paper in Section V.

\section{Preliminaries}
\subsection{Channel Model}
In this paper, we consider a MIMO system equipped with numerous antennas at both the transmit and receive sides, represented by $N_T$ and $N_R$, respectively. Note that the number of antennas can range from 2 to hundreds. Specifically, the received signal vector $\boldsymbol{y}\in\mathbb{C}^{{N_R}\times 1}$ can be given as
\begin{equation}\label{equ:yR}
\boldsymbol{y} = \boldsymbol{Hx} + \boldsymbol{n},
\end{equation}
where $\boldsymbol{x}=\left[{x_1,x_2,\cdots,x_{N_T}}\right] \in\mathbb{C}^{{N_T}\times 1}$ denotes the power-normalized transmitted signal vector, $\boldsymbol{n} = \left[{n_1,n_2,\cdots,n_{N_R}}\right] \in\mathbb{C}^{{N_R}\times 1}$ denotes the additive complex-valued Gaussian white noise, which can be represented as $n_i\sim {\cal{CN}}(0,\sigma_n^2)$. The channel matrix $\boldsymbol{H}\in\mathbb{C}^{{N_R}\times {N_T}}$ reflects the Rayleigh fading effects and can be expressed as
\begin{equation}\label{equ:HCOM}
\begin{aligned}{
	{\boldsymbol{H}} = \left[ {\begin{array}{*{20}{c}}
		{{h_{1,1}}}& \cdots &{{h_{1,N_T}}}\\
		\vdots & \ddots & \vdots \\
		{{h_{N_R,1}}}& \cdots &{{h_{N_R,N_T}}}
		\end{array}} \right]
},
\end{aligned}
\end{equation}
where entries follow the complex-valued Gaussian distribution with zero mean and unit variance. 

Specifically, the V-BLAST MIMO structure \cite{15:BLAST} is adopted at the transmit and receive sides, shown as Fig. \ref{fig:VBLAST}. To evaluate the performance of FG-based iterative MIMO detections more precisely, the averaged received SNR $\rho_r$ is given as
\begin{equation}
{\rho_r} =\mathbb{E}\left\{\frac{{\sum\nolimits_{i = 1,j = 1}^{i = N_R,j = N_T} {|{h_{i,j}}{|^2}} }}{N_T{\sigma _n^2}}\right\}.
\end{equation}
\begin{figure}
	\centering
	\includegraphics[width=0.5\textwidth]{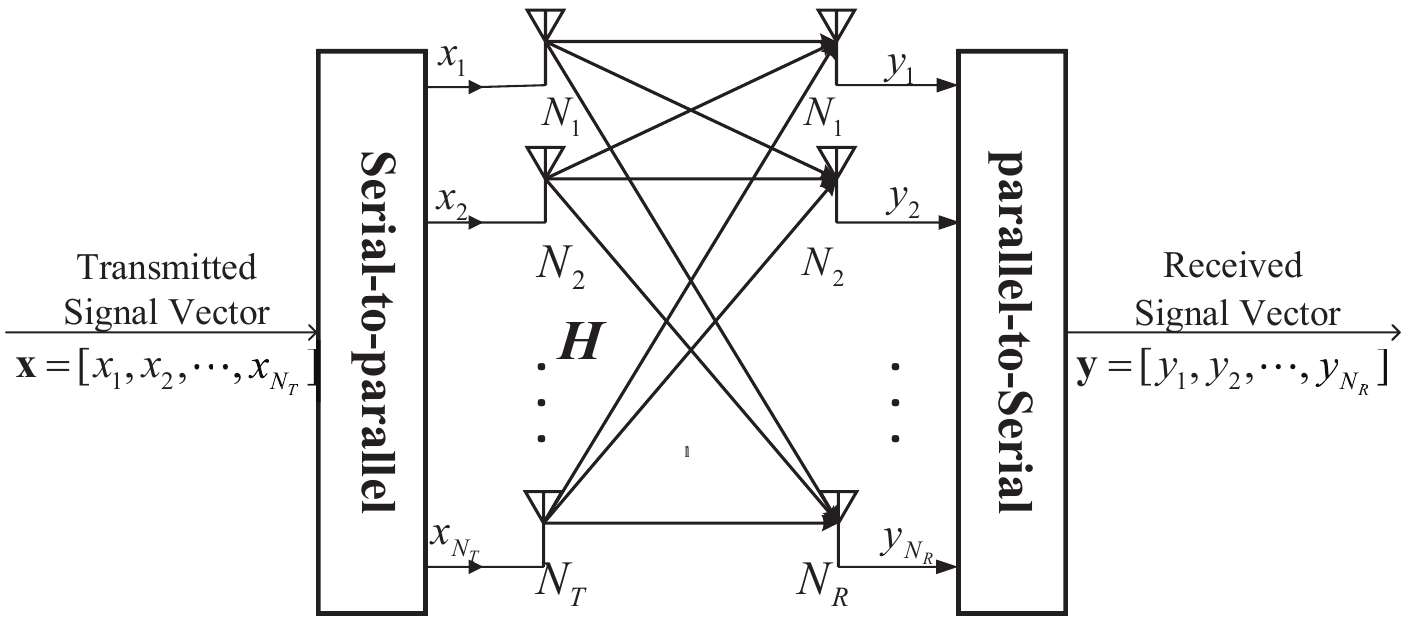}
	\caption{V-BLAST architecture.}\label{fig:VBLAST}
\end{figure}
\subsection{An Overview of the FG-based Iterative MIMO Detection}
The FG-based MIMO detection is a kind of message-passing algorithm, which transmits \textit{a posteriori} probability information between ONs $o_i$ $(i=1,2,\cdots,N_R)$ and VNs $v_l$ $(l=1,2,\cdots,N_T)$. The specific information transfer flow in FG is given in Fig. \ref{fig:FGEP}. 
\begin{figure}
	\centering
	\includegraphics[width=0.5\textwidth]{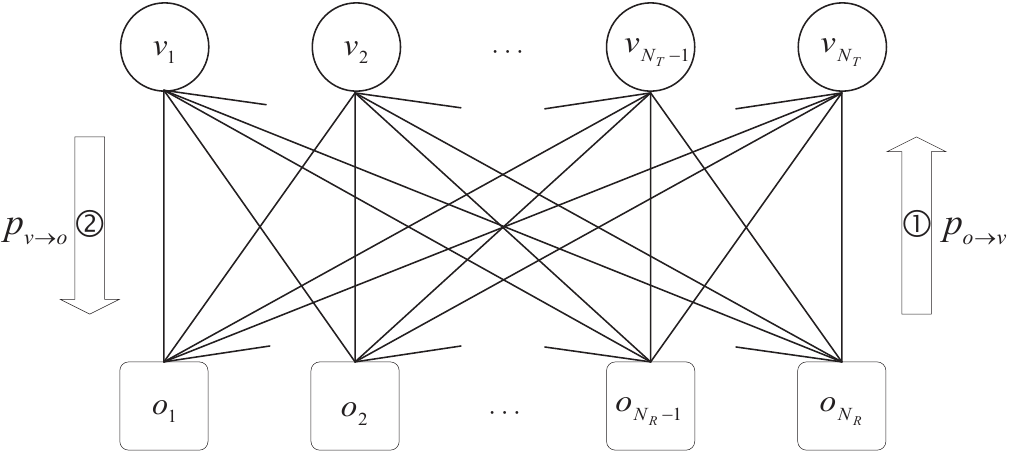}
	\caption{Information transfer flow in FG-based iterative MIMO detections.}\label{fig:FGEP}
\end{figure}

Shown as in Fig. \ref{fig:FGEP}, the FG-based iterative MIMO detection firstly transmit the probability information $p_{o\to v}$ from ON to VN, which is given as
\begin{equation}
p_{o\to v} = \prod\nolimits_{v' \in V(o)\backslash v} {{p_{v' \to o}}},
\end{equation}
where $V(o)\backslash v$ denotes the collection of VN connected to ON except VN $v$.

Then the probability information $p_{v \to o}$ transferred from VN to ON can be expressed as
\begin{equation}
p_{v\to o} = \prod\nolimits_{o' \in O(v)\backslash o} {{p_{o' \to v}}},
\end{equation}
where $O(v)\backslash o$ represents the collection of ON linked to VN except ON $o$. 

To conclude, the FG-based iterative MIMO detection follows the bidirectional-transmission mechanism until the \textit{a posteriori} probability information stays unchanged, which is defined as convergence \cite{20:Conver}.

\section{Error Functions-Aided Analysis Mechanism for FG-based Iterative MIMO Detections}
In our previous work \cite{19:EXITMIMO2}, we introduced an innovative EXIT analysis method with the ability to evaluate mutual information through low-complexity calculations. Although the earlier EXIT analysis was able to present the convergence of iterative MIMO detections, it was not accurate enough. In this paper, we propose a new method that can generate mutual information curves of iterative MIMO detections more precisely.

\begin{figure}
	\centering
	\includegraphics[width=0.5\textwidth]{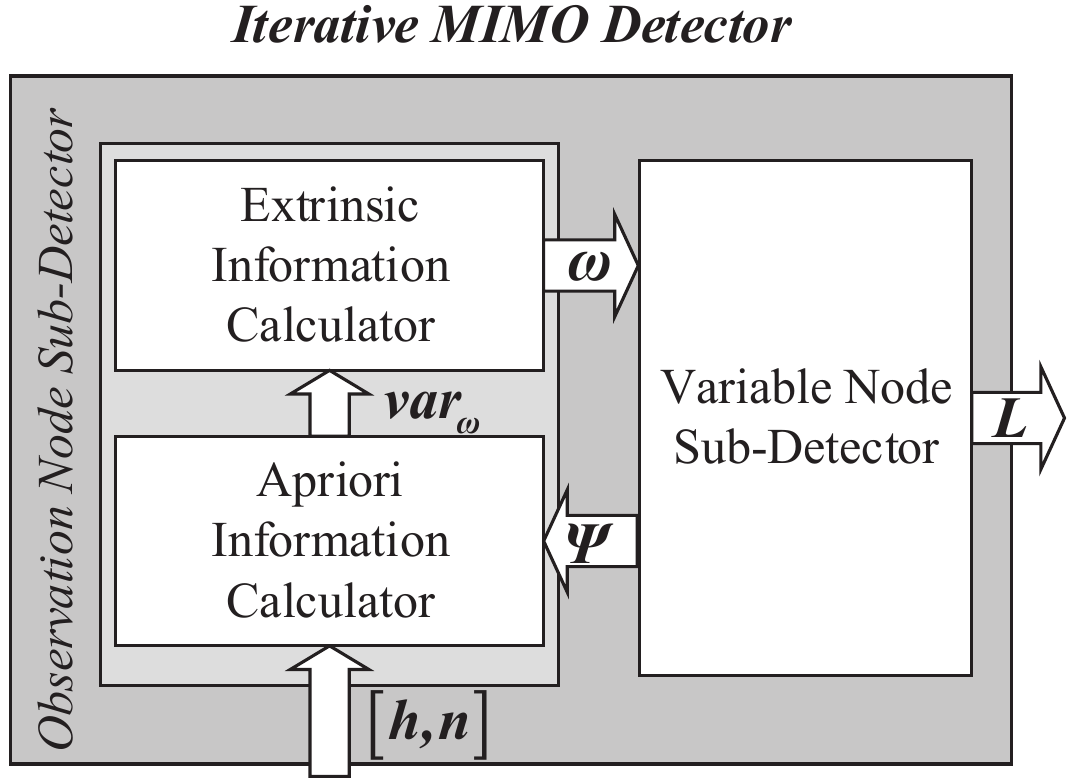}
	\caption{Extrinsic information transfer flow for iterative MIMO detections.}\label{fig:EXITFig}
\end{figure}

The information transfer flow of FG-based iterative MIMO detections can be demonstrated in Fig. \ref{fig:EXITFig}, where the iterative MIMO detector is abstracted to be composed of the ON sub-detector and VN sub-detector. Specifically, the ON sub-detector concludes two components, i.e., the extrinsic information calculator (EIC) and the apriori information calculator (AIC). 

As shown in Fig. \ref{fig:EXITFig}, the AIC firstly updates the information transferred to it, utilizing channel information $h_{i,l}$, $n_i$ ($i=1,2,\cdots,N_R$, $l=1,2,\cdots,N_T$) transformed by VN sub-detector. Hence, the output information $\mathit{var_{\omega_i^l}}$ of AIC can be expressed as 
\begin{equation}\label{equ:defvaromega}
\begin{split}
\mathit{var_{\omega_i^l}} &= \mathit{F_{AIC}}\left({c_{i,l},\psi_i^l}\right),\\
for \: i&=1,2,\cdots,N_R, \:and\: j = 1,2,\cdots,N_T, 
\end{split}
\end{equation}
where $\mathit{F_{AIC}}\left({\cdot}\right)$ denotes the transfer function of AIC.

Another component of the ON sub-detector is EIC, which calculates the mutual information $I_{\omega_i^l}$ between transmit symbol $x_l$ and LLR $\omega_i^l$, and then transfer the mutual information $I_{\omega_i^l}$ to VN sub-detector. The mutual information $I_{\omega_i^l}$ can be represented by (\ref{equ:defIomega}),
\begin{figure*}[h]
	\begin{equation}\label{equ:defIomega}
	\begin{split}
	I_{\omega_i^l} = I({x_l};\omega _i^l)\buildrel \Delta \over =& \frac{1}{2}\sum\limits_{\theta =  \theta_1,\theta_2 }{\int_{ - \infty }^{ + \infty } {{p_\omega }(\omega _i^l|{x_l} = \theta)} } {\log _2}\left( {\frac{{2{p_\omega }(\omega _i^l|{x_l} = \theta)}}{{{p_\omega }(\omega _i^l|{x_l} = \theta_1 ) + {p_\omega }(\omega _i^l|{x_l} =  \theta_2 )}}} \right)d\omega _i^l\\
	\buildrel \Delta \over =& \mathit{F_{\omega}}\left({\mathit{var_{\omega_i^l}}}\right), 
	\end{split}
	\end{equation}
\end{figure*}
where ${p_\omega }(\omega _i^l|{x_l})$ denotes the conditional probability density function (CPDF) of the LLR $\omega_i^l$ at ON $o_i$. In addition, $\mathit{F_{\omega}}\left({\cdot}\right)$ denotes the transfer function of VN-detector, which also indicates that the CPDF ${p_\omega }(\omega _i^l|{x_l})$ is relevant to the output information of AIC $\mathit{var_{\omega_i^l}}$.

For the VN sub-detector, mutual information $I_{\psi_i^l}$ between transmit symbol $x_l$ and LLR $\psi_i^l$ is related to the output information of EIC $\omega_i^l$, and can be expressed by \ref{equ:defIpsi}, which is then transferred to ON sub-detector.
\begin{figure*}[h]
	\begin{equation}\label{equ:defIpsi}
	\begin{split}
	I_{\psi_i^l} = I({x_l};\psi _i^l)\buildrel \Delta \over =& \frac{1}{2}\sum\limits_{\theta =  \theta_1,\theta_2 }{\int_{ - \infty }^{ + \infty } {{p_\psi }(\psi _i^l|{x_l} = \theta)} } {\log _2}\left( {\frac{{2{p_\psi }(\psi _i^l|{x_l} = \theta)}}{{{p_\psi }(\psi _i^l|{x_l} = \theta_1 ) + {p_\psi }(\psi _i^l|{x_l} =  \theta_2 )}}} \right)d\psi _i^l\\
	\buildrel \Delta \over =& \mathit{F_{\psi}}\left({\mathit{var_{\psi_i^l}}}\right), 
	\end{split}
	\end{equation}
	\hrulefill
\end{figure*}
In (\ref{equ:defIpsi}), ${p_\psi }(\psi _i^l|{x_l})$ represents the CPDF of the LLR $\psi_i^l$ at VN $v_l$. Similarly, $\mathit{F_{\psi}}\left( \cdot\right) $ denotes the transfer function of VN sub-detector, and the CPDF ${p_\psi }(\psi _i^l|{x_l})$ is related to the output information of EIC ${\omega_i^l}$. Furthermore, the output information $L_l=\sum_{i'=1}^{N_R}{\omega_{i'}^l}$ of the iterative MIMO detection is also relevant to the output information of EIC ${\omega_i^l}$, then the mutual information of which can be given as
\begin{equation}
I_{L_l} = \mathit{F_{L}}\left({{\mathit{var_{\omega_i^l}}}}\right),
\end{equation}
where $\mathit{F_{L}}\left( \cdot\right)$ represents the transfer function of iterative MIMO detector.

\subsection{The Error Functions-Aided Analysis under BPSK Modulation}
This paper first derives the analysis mechanism for FG-based iterative MIMO detections under BPSK modulation. The previous work \cite{19:EXITMIMO2} has derived the approximate mutual information of the ON sub-detector and VN sub-detector. Although the analysis in \cite{19:EXITMIMO2} is imperfect, it still provides some significant derivation for our current work, which are concluded as {\textbf{\textit{Lemma} \ref{lem:1}}} to {\textbf{\textit{Lemma} \ref{lem:3}}} as follows.

\begin{lemma}\label{lem:1}
	\textup{The mutual information $I_{\omega_i^l}$ between transmited symbol $x_l$ and LLR $\omega_i^l$ at ON $o_i$ can be expressed as}
	\begin{equation}\label{equ:Iomega}
	{I_{\omega_i^l}} \buildrel \Delta \over = \mathit{F_{\omega}}\left({\mathit{var_{\omega_i^l}}}\right)\buildrel \Delta \over = J(\sqrt{\mathit{var_{\omega_i^l}}}),
	\end{equation}
	\textup{where $J\left(\cdot\right)$ is a curve fitting function for the integral in (\ref{equ:defIomega}) proposed in \cite{21:Jfunc}, which is given in Appendix A.}
	
	\textup{Specifically, $\mathit{var_{\omega_i^l}}$ was defined as the equivalent channel variable (ECV) \cite{19:EXITMIMO2} transferred by the AIC, which is computed based on the channel information integrated with extrinsic information from VN sub-detector as follows}
	\begin{equation}\label{equ:ECVori}
	\mathit{var_{\omega_i^l}} = \frac{{4h_{i,l}^2}}{{\sigma _{{g_{il}}}^2}},
	\end{equation} 
	\textup{where ${\sigma _{{g_{il}}}^2}$ denotes the variance of interference signals (i.e., signals from other antennas sum up with channel noise)}.
	
	\textup{The details can be seen in \cite{19:EXITMIMO2}.}
\end{lemma}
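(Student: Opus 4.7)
The plan is to reduce the integral in (\ref{equ:defIomega}) to the canonical consistent-Gaussian LLR case, for which the $J$-function supplied in Appendix A returns a closed-form value. Under BPSK modulation $x_l \in \{-1,+1\}$, I would first rewrite the observation at ON $o_i$ as $y_i = h_{i,l}\, x_l + g_{il}$, where $g_{il} \triangleq \sum_{l' \neq l} h_{i,l'}\, x_{l'} + n_i$ collects the contributions of the other antennas together with the channel noise. Adopting the Gaussian approximation that underpins EP/FG-type detectors, I would treat $g_{il} \sim \mathcal{N}(0, \sigma_{g_{il}}^2)$, where $\sigma_{g_{il}}^2$ is precisely the quantity the AIC computes from the channel coefficients $h_{i,l'}$ and the extrinsic variances fed back by the VN sub-detector.

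Next, I would write the LLR at ON $o_i$ explicitly as $\omega_i^l = \log \frac{p(y_i \mid x_l = +1)}{p(y_i \mid x_l = -1)} = \tfrac{2 h_{i,l}}{\sigma_{g_{il}}^2}\, y_i$, which is an affine function of $y_i$ and therefore conditionally Gaussian. A short calculation then yields $\mathbb{E}[\omega_i^l \mid x_l] = \tfrac{2 h_{i,l}^2}{\sigma_{g_{il}}^2}\, x_l$ and $\mathrm{Var}(\omega_i^l \mid x_l) = \tfrac{4 h_{i,l}^2}{\sigma_{g_{il}}^2}$. The key observation is that these two moments satisfy the consistent-Gaussian (symmetric) identity $\mathrm{Var}(\omega_i^l \mid x_l) = 2\,\bigl|\mathbb{E}[\omega_i^l \mid x_l]\bigr|$, so the conditional density $p_{\omega}(\omega_i^l \mid x_l)$ fits the template that underlies the definition of $J(\cdot)$.

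With consistency in hand, the two-term sum-integral in (\ref{equ:defIomega}) collapses into a single-argument $J$-function evaluation at $\sigma = \sqrt{\mathrm{Var}(\omega_i^l \mid x_l)}$. Identifying this variance with $\mathit{var_{\omega_i^l}} = 4 h_{i,l}^2 / \sigma_{g_{il}}^2$ delivers both (\ref{equ:Iomega}) and (\ref{equ:ECVori}), and the functional identification $\mathit{F_\omega}(\cdot) = J(\sqrt{\cdot})$ closes the chain. The explicit expansion of $\sigma_{g_{il}}^2$ in terms of the VN-side extrinsic messages was already carried out in \cite{19:EXITMIMO2}, so I would only quote it.

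The main obstacle, and the point where the derivation is heuristic rather than strict, is the Gaussian-approximation step for the aggregate interferer $g_{il}$. For moderate $N_T$ and at the first few iterations, $g_{il}$ is merely a finite sum of weighted BPSK symbols plus AWGN, so its density is only approximately Gaussian; the approximation sharpens as $N_T$ grows, which is why the resulting analysis is naturally tight in the m-MIMO regime emphasized by the paper. I would flag this caveat but not attempt to quantify the residual, since the lemma is a structural restatement whose sole purpose is to convert the two-dimensional integral in (\ref{equ:defIomega}) into a one-dimensional $J$-function lookup that is cheap to evaluate downstream.
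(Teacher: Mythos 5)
Your derivation is correct and is exactly the standard argument behind this lemma: the paper itself gives no proof here (it imports the result from \cite{19:EXITMIMO2}), and your reconstruction via the Gaussian interference approximation, the affine LLR $\omega_i^l = 2h_{i,l}(y_i-\mu_{g_{il}})/\sigma_{g_{il}}^2$, the consistency identity $\mathrm{Var}(\omega_i^l\mid x_l)=2\left|\mathbb{E}[\omega_i^l\mid x_l]\right|$, and the resulting collapse of (\ref{equ:defIomega}) into $J(\sqrt{\mathit{var_{\omega_i^l}}})$ is precisely the chain the cited reference relies on. The only nit is that the paper's own Gaussian approximation (\ref{equ:meanG}) gives $g_{il}$ a nonzero mean in later iterations, so you should center the LLR on $y_i-\mu_{g_{il}}$ rather than assume a zero-mean interferer; the conditional moments, and hence the conclusion, are unchanged.
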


\begin{lemma}\label{lem:2}
	\textup{The CPDF of extrinsic LLR $\psi_i^l$ at VN $v_l$ follows the Gaussian distribution of mean $\mu_{\psi_i^l}$ and variance $\mathit{var_{\psi_i^l}}$, which can be represented as}
	\begin{equation}\label{equ:PDFpsi}
	\begin{split}
	{p_\psi }(\psi _i^l|{x_l}) =& {p_\psi }(\psi _i^l|{h_{i,l}},{x_l})\\
	=& \frac{1}{{\sqrt {2\pi \mathit{var_{\psi_i^l}}} }}\exp \left( { - \frac{{{{\left( {\psi _i^l - \mu_{\psi_i^l} } \right)}^2}}}{{2\mathit{var_{\psi_i^l}}}}} \right),
	\end{split}
	\end{equation}
	\textup{where $\mathit{var_{\psi_i^l}}$ denotes the variance of the extrinsic LLR $\psi _i^{l}$, and can be calculated by}
	\begin{equation}\label{equ:varpsi}
	\begin{split}
	\mathit{var_{\psi_i^l}}=&\sum\nolimits_{i'=1,i'\ne i}^{N_R}{\mathit{var_{\omega_{i'}^l}}}\\
	=&\sum\nolimits_{i'=1,i'\ne i}^{N_R}{J^{-1}\left({I_{\omega_i^{l'}}}\right)},
	\end{split}
	\end{equation}
	\textup{where $J^{-1}\left({\cdot}\right)$ denotes the inverse function of the curve fitting function $J\left(\cdot\right)$, which is given in Appendix A. Furthermore, the mean value is given as}
	\begin{equation}
	\begin{split}
	\mu_{\psi_i^l} =& \pm\frac{\mathit{var_{\psi_i^l}}}{2}\\
	=& \sum\nolimits_{i'=1,i'\ne i}^{N_R}{\pm\frac{\mathit{var_{\omega_{i'}^l}}}{2}}.
	\end{split}
	\end{equation}
	
	\textup{The details can be seen in \cite{19:EXITMIMO2}.}
\end{lemma}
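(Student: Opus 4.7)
The plan is to prove the lemma in three stages: recover the exact message-passing identity that defines $\psi_i^l$ at the variable node, invoke the Gaussian approximation on each incoming extrinsic LLR together with the LLR consistency (symmetry) condition, and then close the Gaussian family under independent summation to obtain the stated expressions in (\ref{equ:PDFpsi}) and (\ref{equ:varpsi}).

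First I would write down the VN update rule already introduced in the Preliminaries. Under BPSK, the extrinsic LLR corresponding to $p_{v\to o}=\prod_{o'\in O(v)\backslash o}p_{o'\to v}$ is the sum of the individual extrinsic LLRs, so that
\begin{equation}
\psi_i^l=\sum\nolimits_{i'=1,i'\ne i}^{N_R}\omega_{i'}^l.
\end{equation}
This reduces the distributional claim to analyzing a sum of $(N_R-1)$ LLRs. Next, I would invoke the Gaussian approximation used throughout the EF-AA framework: conditioned on the transmitted bit $x_l$, each $\omega_{i'}^l$ produced by the ON sub-detector is modeled as Gaussian with variance $\mathit{var_{\omega_{i'}^l}}$ equal to the ECV derived in Lemma~\ref{lem:1}. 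Because $\omega_{i'}^l$ is a true log-likelihood ratio, it satisfies the consistency/symmetry condition, so its conditional mean is $\pm\mathit{var_{\omega_{i'}^l}}/2$, with the sign dictated by $x_l$. Conditional independence of the family $\{\omega_{i'}^l\}_{i'\ne i}$ follows from the independence of the complex Gaussian noise components $\{n_{i'}\}$ together with the depth-one tree hypothesis that underlies the density-evolution style analysis already implicit in Lemma~\ref{lem:1}.

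Finally, I would invoke closure of the Gaussian family under independent summation: a sum of $(N_R-1)$ conditionally independent consistent Gaussians is again Gaussian, with variance equal to the sum of variances and mean equal to the sum of means, so
\begin{equation}
\mathit{var_{\psi_i^l}}=\sum\nolimits_{i'=1,i'\ne i}^{N_R}\mathit{var_{\omega_{i'}^l}},\qquad \mu_{\psi_i^l}=\pm\frac{\mathit{var_{\psi_i^l}}}{2}.
\end{equation}
Substituting the inverse of the curve-fitting relation $I_{\omega_i^l}=J(\sqrt{\mathit{var_{\omega_i^l}}})$ from Lemma~\ref{lem:1} then gives $\mathit{var_{\omega_{i'}^l}}=J^{-1}(I_{\omega_{i'}^l})$, recovering the second form in (\ref{equ:varpsi}); inserting the Gaussian density into the CPDF template completes (\ref{equ:PDFpsi}). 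The hardest part, I expect, will be rigorously justifying the conditional independence of the $\omega_{i'}^l$'s across different observation nodes, since in a finite MIMO system these messages share correlated history through earlier iterations; the standard remedy, which I would follow, is to adopt the one-step tree hypothesis that Lemma~\ref{lem:1} also relies on, after which the independent-Gaussian-sum step applies verbatim.
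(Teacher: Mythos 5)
Your proposal is correct and follows the only natural route: the paper itself gives no proof of this lemma (it defers entirely to reference [19]), and your argument --- LLR-domain additivity of the VN product update, the consistent-Gaussian model for each $\omega_{i'}^l$ with conditional mean $\pm\mathit{var_{\omega_{i'}^l}}/2$, and closure of independent Gaussians under summation --- is exactly the density-evolution reasoning the $J$-function framework in Appendix~A presupposes. The one blemish, inherited from the paper's own equation~(\ref{equ:varpsi}), is that $I_{\omega}=J(\sqrt{\mathit{var_{\omega}}})$ strictly yields $\mathit{var_{\omega_{i'}^l}}=\bigl[J^{-1}(I_{\omega_{i'}^l})\bigr]^2$ rather than $J^{-1}(I_{\omega_{i'}^l})$; your candid flagging of the cross-iteration independence assumption is appropriate and handled the standard way.
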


\begin{lemma}\label{lem:3}
	\textup{The mutual information $I_{\psi_i^l}$ between transmited symbol $x_l$ and LLR $\psi_i^l$ at VN $v_l$ is given as}
	\begin{equation}\label{equ:Ipsi}
	{I_{\psi _i^{l}}} \buildrel \Delta \over = \mathit{F_{\psi}}\left({\mathit{var_{\psi_i^l}}}\right)= J\left( {\sqrt{\mathit{var_{\psi_i^l}}}} \right).
	\end{equation}
	
	\textup{The details can be seen in \cite{19:EXITMIMO2}.}
\end{lemma}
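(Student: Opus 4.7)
The plan is to derive (\ref{equ:Ipsi}) by substituting the Gaussian conditional density from Lemma \ref{lem:2} into the mutual information definition (\ref{equ:defIpsi}), and then reducing the resulting integral to the canonical form that defines the $J$-function of Appendix A.

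First, I would invoke (\ref{equ:PDFpsi}) to replace ${p_\psi}(\psi_i^l|x_l = \theta)$ in (\ref{equ:defIpsi}) by a Gaussian density of variance $\mathit{var_{\psi_i^l}}$ and mean $+\mathit{var_{\psi_i^l}}/2$ for $\theta=\theta_1$ and $-\mathit{var_{\psi_i^l}}/2$ for $\theta=\theta_2$, i.e.\ the two BPSK symbols. Using the BPSK symmetry ${p_\psi}(\zeta|x_l=\theta_2)={p_\psi}(-\zeta|x_l=\theta_1)$, I would collapse the outer sum over $\theta$ into a single integral whose integrand is the mean-positive Gaussian weighted by $\log_2(1+e^{-\psi_i^l})$. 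This is the classical manipulation that exposes the canonical $J$-function form.

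Once the integral is in the form $I_{\psi_i^l}=1-\int \varphi(\zeta;\sigma^2/2,\sigma^2)\log_2(1+e^{-\zeta})\,d\zeta$, where $\varphi(\cdot;\mu,v)$ denotes the Gaussian density with mean $\mu$ and variance $v$, I would identify the effective $\sigma$ by matching the consistent-Gaussian relation $\mu=\sigma^2/2$, $v=\sigma^2$. Because Lemma \ref{lem:2} gives $|\mu_{\psi_i^l}|=\mathit{var_{\psi_i^l}}/2$ together with variance $\mathit{var_{\psi_i^l}}$, the matching forces $\sigma=\sqrt{\mathit{var_{\psi_i^l}}}$. Substituting this into the definition of $J$ in Appendix A then produces $I_{\psi_i^l}=J\!\left(\sqrt{\mathit{var_{\psi_i^l}}}\right)$, which is exactly (\ref{equ:Ipsi}).

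The principal obstacle is justifying the consistent-Gaussian condition $|\mu_{\psi_i^l}|=\mathit{var_{\psi_i^l}}/2$, because without it the sum-of-two-Gaussians in the log-ratio of (\ref{equ:defIpsi}) does not telescope into $\log_2(1+e^{-\zeta})$ and the $J$-function representation breaks down. Lemma \ref{lem:2} supplies precisely this property, so the remaining argument reduces to standard $J$-function algebra. A secondary subtlety is that $\mathit{var_{\psi_i^l}}$ in (\ref{equ:varpsi}) is a sum of ECVs across the other $N_R-1$ ONs, so the Gaussian approximation of the aggregated extrinsic LLR implicitly relies on the (asymptotic) independence of those individual contributions; this is the statistical assumption that the whole derivation rests on and that would have to be invoked when the lemma is deployed in the convergence analysis.
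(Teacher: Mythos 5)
Your derivation is correct: the paper itself offers no proof of this lemma (it simply defers to \cite{19:EXITMIMO2}), and what you have written is the standard ten Brink argument that the cited reference and Appendix A rest on --- substituting the consistent Gaussian densities with $|\mu_{\psi_i^l}|=\mathit{var_{\psi_i^l}}/2$ and variance $\mathit{var_{\psi_i^l}}$ into (\ref{equ:defIpsi}), using the symmetry of the two conditionals to reduce the log-ratio to $1-\log_2\bigl(1+e^{-\psi_i^l}\bigr)$, and matching $\sigma=\sqrt{\mathit{var_{\psi_i^l}}}$ against the parametrization under which $J(\cdot)$ is defined in Appendix A. You also correctly isolate the consistency condition supplied by Lemma~\ref{lem:2} as the step without which the reduction to the $J$-function fails, so the proposal fills the omitted proof faithfully rather than diverging from it.
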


In what follows, we derive the mutual information ${I_{\omega_i^l}}$, ${I_{\psi _i^{l}}}$ and $I_{L_l}$ according to Fig. \ref{fig:EXITFig} and the aforementioned {\textbf{\textit{Lemmas}}}, respectively. It is noted that the mutual information ${I_{\omega_i^l}}$ calculated by EIC at ON sub-detector is given in {\textbf{\textit{Lemma} \ref{lem:1}}} as (\ref{equ:Iomega}), with parameter $\mathit{var_{\omega_i^l}}$ as the input. Furthermore, the mutual information ${I_{\psi _i^{l}}}$ computed at VN sub-detector is defined in {{\textbf{\textit{Lemma}}} \textbf{\ref{lem:3}}} as (\ref{equ:Ipsi}), and the input parameter $\mathit{var_{\psi_{i}^l}}$ can be obtained through the summation of $\mathit{var_{\omega_i^l}}$. Therefore, in this section, we mainly focus on the computation of the variance $\mathit{var_{\omega_i^l}}$ at the AIC, i.e., the function $\mathit{F_{AIC}}\left({\cdot}\right)$ in (\ref{equ:defvaromega}).

At AIC, the received symbol $y_i$ can be separated into two parts, namely the symbol from the corresponding transmit antenna and interference, which is given as 
\begin{equation}\label{equ:GA}
\begin{split}
{y_i} &= \sum\nolimits_{l = 1}^{N_T} {{h_{i,l}}{x_l}}  + {n_i}\\
&= {h_{i,l}}{x_l} + \underbrace {\sum\nolimits_{l' = 1,l' \ne l}^{N_T} {{h_{i,l'}}{x_{l'}}}  + {n_i}}_{{g_{il}}},
\end{split}
\end{equation}
where the signals from other antennas plus channel noise are approximated to Gaussian random variables as ${g_{il}}\sim{\cal N}({\mu _{{g_{il}}}},\sigma _{{g_{il}}}^2)$. Specifically, the mean ${\mu _{{g_{il}}}}$ and variance $\sigma _{{g_{il}}}^2$ can be computed as
\begin{equation}\label{equ:meanG}
{\mu _{{g_{il}}}} = \sum\nolimits_{l' = 1,l' \ne l}^{N_T} {{h_{i,l'}}{\mathbb{E}(x_{l'})}} ,
\end{equation}
\begin{equation}\label{equ:varG}
\begin{split}
\sigma _{{g_{il}}}^2 =& \sum\nolimits_{l' = 1,l' \ne l}^{N_T}{{\left| h_{i,l'}\right| ^2}{\mathbb{V}(x_{l'})} }+\sigma_n^2,
\end{split}
\end{equation}
in which
\begin{equation}\label{equ:Exl}
\mathbb{E}(x_{l'})={\sum\nolimits_{j = 1}^2 {{P_i^{l'}}({\theta_j})\cdot{\theta_j}}},
\end{equation}
and 
\begin{equation}\label{equ:Varxl}
\mathbb{V}(x_{l'})={ {\sum\nolimits_{j = 1}^2 {{P_i^{l'}}({\theta_j})\cdot{\theta_j}^2}}-\left| {\mathbb{E}(x_{l'})}\right|^2},
\end{equation}
are the mean and variance of $x_{l'}$ respectively, in which $P_i^{l'}(\theta_j)$ represents the probability of $x_{l'}=\theta_j\in\left\lbrace{\pm 1} \right\rbrace $ estimated at VN $v_{l'}$.

From {{\textbf{{\textbf{\textit{Lemma}}} \ref{lem:1}}}}, we get to know that the ECV $\mathit{var_{\omega_i^l}}$ is essential to the calculation of mutual information $I_\omega$. Formula (\ref{equ:ECVori}) also indicates that the critical part of ECV $\mathit{var_{\omega_i^l}}$ is the variance $\sigma_{{g_{il}}}$. Furthermore, according to (\ref{equ:varG}) and (\ref{equ:Varxl}), the calculation of probability $P_i^{l'}(\theta_j)$ at VN $v_l$ is indiepensable for the computation of variance $\sigma_{{g_{il}}}$, as well as the ECV $\mathit{var_{\omega_i^l}}$. 

Specifically, the $P_i^{l}(\theta_j)$ at VN $v_l$ can be initialized as an equal probability in the first iteration during the iterative MIMO detections. Therefore, in the first iteration, we have 
\begin{equation}\label{equ:sigit1}
\begin{split}
\mathbb{V}(x_{l})=& \frac{1}{2}\left( {1 ^2+\left( {-1}\right) ^2}\right)\!-\!\left| \frac{1}{2}\left( {{1}-{1}}\right) \right|^2\\
=& 1.
\end{split} 
\end{equation}

However, the variance $\mathbb{V}(x_{l})$ remains uncertain in the subsequent iterations, for the reason that $P_i^{l'}(\theta_j)$ is changeable with the LLR $\omega_{i}^l$ transferred from the ON sub-detector. Consequently, it is crucial to derive the probability $P_i^{l'}(\theta_j)$ at VN $v_l$ in the subsequent iterations. Before that, we propose the following {{\textbf{{\textbf{\textit{Theorem}}} \ref{the:1}}}} as the basis of our further analysis.
\begin{theorem}\label{the:1}
	\textup{Given the PDF $p_{L}\left( {L_\alpha}\right) $ of LLR $L_\alpha$ for symbol $\alpha$, we can get the probability of symbol $\alpha = 1$ as}
	\begin{equation}\label{equ:P_1}
	P(\alpha = 1) = \int_0^\infty  {{p_L}\left( {{L_\alpha }} \right)d{L_\alpha }}.
	\end{equation}
	
	\textup{Similarly, the probability of symbol $\alpha = 0$ is given as}
	\begin{equation}\label{equ:P_0}
	P(\alpha = 0) = \int_{-\infty}^0  {{p_L}\left( {{L_\alpha }} \right)d{L_\alpha }}.
	\end{equation}
\end{theorem}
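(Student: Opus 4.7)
The plan is to derive both integral expressions directly from the definition of the log-likelihood ratio. For a binary symbol $\alpha$, the LLR is conventionally defined as
\begin{equation*}
L_\alpha = \log\frac{P(\alpha=1)}{P(\alpha=0)},
\end{equation*}
so the event $\{L_\alpha>0\}$ is algebraically equivalent to $\{P(\alpha=1)>P(\alpha=0)\}$, i.e., $\alpha=1$ is the more likely symbol, and the event $\{L_\alpha<0\}$ corresponds to the opposite. Inverting this logistic relation gives the one-to-one correspondence between the sign of $L_\alpha$ and the identity of the symbol, which is the conceptual backbone of the argument.

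First, I would treat $L_\alpha$ as a random variable with known density $p_L$ and partition the real line into $\{L_\alpha>0\}$, $\{L_\alpha<0\}$, and the measure-zero boundary $\{L_\alpha=0\}$. Applying the mapping just established, the probability mass assigned to the decision $\alpha=1$ coincides with the probability that $L_\alpha$ lies in the positive half-line, which is precisely $\int_0^\infty p_L(L_\alpha)\,dL_\alpha$; a symmetric argument yields the expression for $P(\alpha=0)$ as an integral over $(-\infty,0]$. A consistency check that the two integrals sum to unity then follows directly from $\int_{-\infty}^{\infty} p_L(L_\alpha)\,dL_\alpha = 1$.

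The main subtlety will be justifying the identification of the ``probability of the symbol'' with the hard-decision measure $P(L_\alpha>0)$ rather than the soft posterior expectation $\mathbb{E}[e^{L_\alpha}/(1+e^{L_\alpha})]$. I would handle this by appealing to the factor-graph iteration context established in Section~III: the quantity $P_i^{l'}(\theta_j)$ that the theorem feeds into (\ref{equ:Exl})--(\ref{equ:Varxl}) is the per-iteration symbol estimate used to build the Gaussian interference model in (\ref{equ:GA}), which is already a sign-based classification consistent with the MAP rule under the implicit equiprobable prior. Once this interpretive step is in place, the two displayed integrals become immediate bookkeeping statements, and no further calculation is required.
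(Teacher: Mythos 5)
Your proposal is correct and follows essentially the same route as the paper: both arguments rest on the hard-decision rule that the sign of $L_\alpha$ determines the symbol, and then read off $P(\alpha=1)$ and $P(\alpha=0)$ as the probability mass of $p_L$ on the positive and negative half-lines respectively. The only difference is cosmetic — you explicitly flag (and justify from the factor-graph context) the identification of the symbol probability with the hard-decision measure $P(L_\alpha>0)$, a step the paper passes over by writing $P(\alpha=1)\propto P(L_\alpha>0)$ and then silently promoting it to equality.
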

\begin{proof}
	\textup{In the process of iterative MIMO detection, hard decision adopts the following criteria}
	\begin{equation}\label{equ:hardes}
	\alpha  = \left\{ {\begin{array}{*{20}{l}}
		{1,}&{if\:{L_\alpha } > 0}\\
		{0,}&{if\:{L_\alpha } < 0}
		\end{array}}, \right.
	\end{equation}
	\textup{which indicates the value of $\alpha$ depends on the sign of its LLR ${L_\alpha }$. While given the probability of the symbol's LLR, instead of its accurate value, we can only obtain the probability of $P({L_\alpha } > 0)$ and $P({L_\alpha } < 0)$ through the integral of $p_{L}\left( {L_\alpha}\right)$, which can be expressed as}
	\begin{equation}\label{equ:PLLA}
	P({L_\alpha } > 0) = \int_0^\infty  {{p_L}\left( {{L_\alpha }} \right)d{L_\alpha }},
	\end{equation}
	\textup{and}
	\begin{equation}\label{equ:PLLE}
	P({L_\alpha } < 0) = \int_{-\infty}^0  {{p_L}\left( {{L_\alpha }} \right)d{L_\alpha }}.
	\end{equation}
	
	\textup{Thus, combined with (\ref{equ:PLLA}) and (\ref{equ:PLLE}), which reflect the sign of ${L_\alpha }$ through probability, (\ref{equ:hardes}) can be revised as}
	\begin{equation}\label{equ:P}
	\left\{ {\begin{array}{*{20}{l}}
		{P(\alpha = 1)\propto P({L_\alpha } > 0)}\\
		{P(\alpha = 0)\propto P({L_\alpha } < 0)}
		\end{array}}. \right.
	\end{equation}
	
	\textup{Therefore, we can represent the probability of $\alpha$ as (\ref{equ:P_1}) and (\ref{equ:P_0}), respectively.}
	\QEDA
\end{proof}

In {{\textbf{{\textbf{\textit{Lemma}}} \ref{the:2}}}}, the CPDF of extrinsic LLR $\psi_i^l$ at VN $v_l$ is given as a Gaussian PDF with known mean and variance. Then we can derive the probability $P_i^{l'}(\theta_j)$ at VN $v_l$ in the subsequent iterations according to {{\textbf{{\textbf{\textit{Theorem}}} \ref{the:1}}}} and the CPDF (\ref{equ:PDFpsi}) in {{\textbf{{\textbf{\textit{Lemma}}} \ref{the:2}}}}.

\begin{theorem}\label{the:2}
	\textup{Given the variance $\mathit{var_{\psi_{i}^l}}$ of extrinsic LLR $\psi_i^{l}$ of VN sub-detector, we can calculate the probability $P_i^{l}(\theta_j)$ estimated at VN $v_{l}$ in the subsequent iterations as follows}
	\begin{equation}\label{equ:PilP}
	\begin{split}
	P_i^l( + 1 ) =& P_i^l(\theta_j =  + 1 )\\
	=& \left\{ {\begin{array}{*{20}{l}}
		{\frac{1}{2}\left[1+\mathit{erf}\left( {\sqrt {{\mathit{var_{\psi_i^l}}}/{8}} } \right)\right] ,if\:{x_l} =  + 1 ,}\\
		{\frac{1}{2}\mathit{erfc}\left( {\sqrt {{\mathit{var_{\psi_i^l}}}/{8}} } \right),if\:{x_l} =  - 1,}
		\end{array}} \right.
	\end{split}
	\end{equation}
	\textup{and}
	\begin{equation}\label{equ:PilW}
	\begin{split}
	P_i^{l}(- 1) =& P_i^l(\theta_j =  - 1 )\\
	=& \left\{ {\begin{array}{*{20}{l}}
		{\frac{1}{2}\left[1+\mathit{erf}\left( {\sqrt {{\mathit{var_{\psi_i^l}}}/{8}} } \right)\right],if\:{x_l} =  - 1 ,}\\
		{\frac{1}{2}\mathit{erfc}\left( {\sqrt {{\mathit{var_{\psi_i^l}}}/{8}} } \right),if\:{x_l} =  + 1,}
		\end{array}} \right.
	\end{split}
	\end{equation}
	\textup{where $\mathit{erf}\left({\cdot}\right) $ and $\mathit{erfc}\left({\cdot}\right)$ are the error function and complementary error function respectively, which were defined in \cite{22:ERF} as}
	\begin{equation}
	\mathit{erf}\left({x}\right) = \frac{2}{{\sqrt \pi  }}\int_0^x  {{exp\left( { - {t^2}}\right) }dt},
	\end{equation}
	\textup{and}
	\begin{equation}
	\mathit{erfc}\left({x}\right) = 1 - \mathit{erf}\left({x}\right)\\
	= \frac{2}{{\sqrt \pi  }}\int_x^\infty  {{exp({ - {t^2}})}dt}.
	\end{equation}
\end{theorem}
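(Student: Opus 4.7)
The plan is to combine \textbf{\textit{Theorem}} \ref{the:1} (which converts a probability query on a symbol into an integral of its LLR density over a half-line) with \textbf{\textit{Lemma}} \ref{lem:2} (which asserts the LLR $\psi_i^l$ is Gaussian with variance $\mathit{var_{\psi_i^l}}$ and signed mean $\pm\mathit{var_{\psi_i^l}}/2$), and then reduce the resulting tail integral of a Gaussian to an error-function expression by a single change of variable. This is the standard ``$Q$-function in terms of $\mathrm{erf}$'' manipulation, so the algebra is routine once the two ingredients are in place.

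First I would invoke \textbf{\textit{Theorem}} \ref{the:1} with $\alpha\leftrightarrow x_l$ and the LLR $L_\alpha\leftrightarrow \psi_i^l$, so that
\begin{equation*}
P_i^l(+1)=\int_{0}^{+\infty} p_\psi(\psi_i^l\mid x_l)\,d\psi_i^l,
\qquad
P_i^l(-1)=\int_{-\infty}^{0} p_\psi(\psi_i^l\mid x_l)\,d\psi_i^l.
\end{equation*}
Next I would substitute the Gaussian form (\ref{equ:PDFpsi}) from \textbf{\textit{Lemma}} \ref{lem:2}, splitting into the two conditioning cases dictated by the sign of the mean: $\mu_{\psi_i^l}=+\mathit{var_{\psi_i^l}}/2$ when $x_l=+1$ and $\mu_{\psi_i^l}=-\mathit{var_{\psi_i^l}}/2$ when $x_l=-1$.

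For each of the four (symbol value, conditioning) combinations I would make the substitution $t=(\psi_i^l-\mu_{\psi_i^l})/\sqrt{2\,\mathit{var_{\psi_i^l}}}$, which turns the integrand into $\tfrac{1}{\sqrt\pi}e^{-t^2}$ and maps the lower/upper limit $0$ to $\mp\mu_{\psi_i^l}/\sqrt{2\,\mathit{var_{\psi_i^l}}}=\mp\sqrt{\mathit{var_{\psi_i^l}}/8}$. Splitting the resulting integral at $t=0$ and using $\tfrac{1}{\sqrt\pi}\int_0^{x}e^{-t^2}dt=\tfrac12\mathrm{erf}(x)$ together with $\tfrac{1}{\sqrt\pi}\int_x^{\infty}e^{-t^2}dt=\tfrac12\mathrm{erfc}(x)$ yields, in the matching-sign case ($x_l=+1$ for $P_i^l(+1)$ and $x_l=-1$ for $P_i^l(-1)$), the value $\tfrac12[1+\mathrm{erf}(\sqrt{\mathit{var_{\psi_i^l}}/8})]$, and in the opposite-sign case the value $\tfrac12\mathrm{erfc}(\sqrt{\mathit{var_{\psi_i^l}}/8})$, exactly as claimed in (\ref{equ:PilP})--(\ref{equ:PilW}). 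A brief sanity check that the two cases sum to $1$ (consistent with $\mathrm{erf}+\mathrm{erfc}=1$) closes the argument.

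The only genuinely nontrivial step is tracking the signs in the change of variable: if the mean is handled carelessly, the $\mathrm{erf}$ and $\mathrm{erfc}$ branches get swapped. To avoid this I would present the case $x_l=+1$ in full detail, then obtain the case $x_l=-1$ either by repeating the substitution with the flipped mean or by invoking the symmetry $p_\psi(\psi\mid x_l=-1)=p_\psi(-\psi\mid x_l=+1)$, which immediately interchanges the roles of the two half-line integrals and gives the stated expressions.
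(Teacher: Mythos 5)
Your proposal is correct and follows essentially the same route as the paper: invoke \textbf{\textit{Theorem}} \ref{the:1} to express the symbol probabilities as half-line integrals of the Gaussian CPDF from \textbf{\textit{Lemma}} \ref{lem:2}, then reduce each tail integral to $\mathit{erf}/\mathit{erfc}$ by the substitution $t=(\psi_i^l-\mu_{\psi_i^l})/\sqrt{2\,\mathit{var_{\psi_i^l}}}$, exactly as the paper does in (\ref{equ:pilR})--(\ref{equ:pilW}) and Appendix B. The only cosmetic difference is that the paper packages the four cases as right-decision and wrong-decision probabilities averaged over the uniform prior $P(x_l)=\tfrac12$, whereas you condition on $x_l$ directly, which matches the theorem statement at least as cleanly.
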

\begin{proof}
	\textup{From aforementioned {{\textbf{{\textbf{\textit{Lemma}}} \ref{the:2}}}} we can know that the CPDF estimated at VNs follows the Gaussian distribution, and is given as}
	\begin{equation}
	\begin{split}
	{p_\psi }(\psi _i^l|{h_{i,l}},{x_l}) =& {p_\psi }(\psi _i^l|{{x_l}=\mp 1})\\
	=& \frac{1}{{\sqrt {2\pi \mathit{var_{\psi_{i}^l}}} }}\exp \left( { - \frac{{{{\left( {\psi _i^l \pm  \mathit{var_{\psi_{i}^l}}/2} \right)}^2}}}{{2\mathit{var_{\psi_{i}^l}}}}} \right),
	\end{split}
	\end{equation}
	\textup{which is determined by the transmit symbol $x_l$.}
	\begin{figure}
		\centering
		\includegraphics[width=0.5\textwidth]{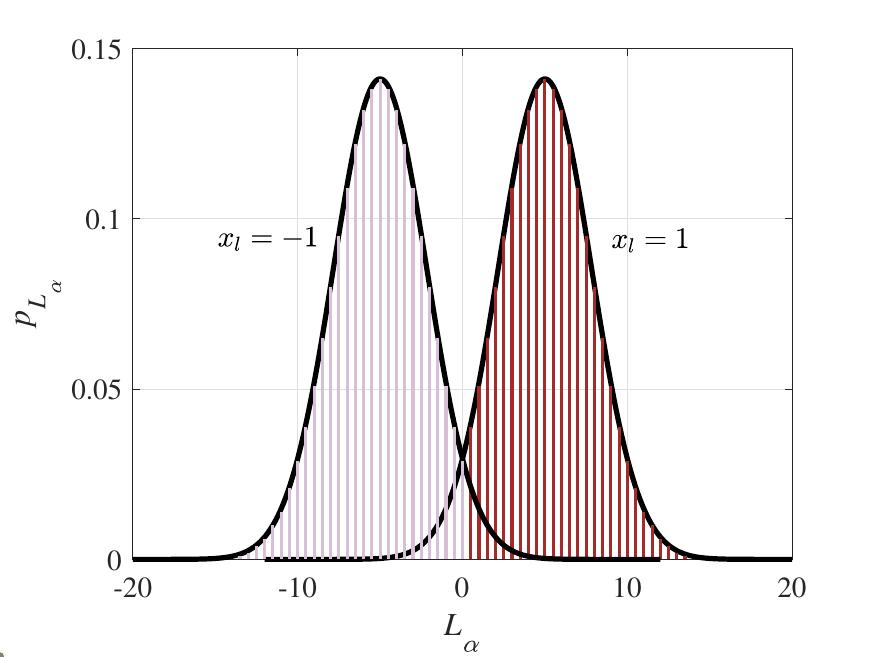}
		\caption{The probability of right-decision $P_i^l({\theta _j} \to R)$ represented by the integral of ${p_\psi }(\psi _i^l|{x_l=\mp 1})$ (see the shadowing parts).}\label{fig:PILR}
	\end{figure}
	\begin{figure}
		\centering
		\includegraphics[width=0.5\textwidth]{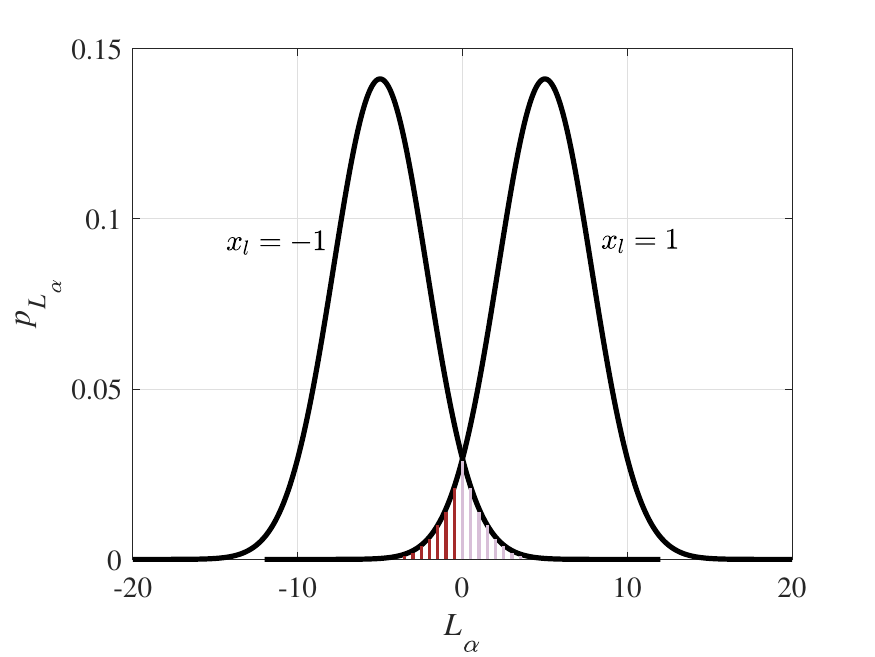}
		\caption{The probability of wrong-decision $P_i^l({\theta _j} \to W)$ represented by the integral of ${p_\psi }(\psi _i^l|{x_l=\mp 1})$ (see the shadowing parts).}\label{fig:PILW}
	\end{figure}
	
	\textup{Combined with {\textbf{\textit{Theorem} \ref{the:1}}} , we define the probability of right-decision $P_i^l({\theta _j} \to R)$ and wrong-decision $P_i^l({\theta _j} \to W)$ at VNs. The corresponding shadowing parts in Fig. \ref{fig:PILR} and Fig. \ref{fig:PILW} are presented to better illustrate $P_i^l({\theta _j} \to R)$ and $P_i^l({\theta _j} \to W)$, respectively. Then The calculation of $P_i^l({\theta _j} \to R)$ and $P_i^l({\theta _j} \to W)$ are derived as (\ref{equ:pilR}) and (\ref{equ:pilW}). It is noted that the detailed deduction of (\ref{equ:pilR}.a) and (\ref{equ:pilW}.a) is shown in Appendix B.}
	\begin{figure*}[h]
		\begin{equation}\label{equ:pilR}
		\begin{split}
		P_i^l({\theta _j} \to R) =& \sum\nolimits_{{x_l} = {\theta _j}} {P({x_l})P({\theta _j}|{x_l})} \\
		=& \frac{1}{2}\int_0^\infty  {{p_\psi }(\psi _i^l|{x_l} = {\theta _j} = 1 )}d\psi _i^l  + \frac{1}{2}\int_{ - \infty }^0 {{p_\psi }(\psi _i^l|{x_l} = {\theta _j} =  - 1 )}d\psi _i^l\\
		\overset{(a)}{=}& \frac{1}{2{\sqrt {2\pi \mathit{var_{\psi_{i}^l}}} }}\left\lbrace \int_0^\infty{\exp \left( { - \frac{{{{\left( {\psi _i^l - \mathit{var_{\psi_{i}^l}}/2} \right)}^2}}}{{2\mathit{var_{\psi_{i}^l}}}}} \right)}d\psi _i^l+\int_{ - \infty }^0{\exp \left( { - \frac{{{{\left( {\psi _i^l +  \mathit{var_{\psi_{i}^l}}/2} \right)}^2}}}{{2\mathit{var_{\psi_{i}^l}}}}} \right)}d\psi _i^l\right\rbrace\\
		\overset{(b)}{=}& \frac{1}{4}\left[1 + {\mathit{erf}\left( {\sqrt {\frac{\mathit{var_{\psi_{i}^l}}}{8}} } \right)} \right] + \frac{1}{4}\left[1 + {\mathit{erf}\left( {\sqrt {\frac{\mathit{var_{\psi_{i}^l}}}{8}} } \right)} \right]\\
		=& \frac{1}{2}\left[ {1+\mathit{erf}\left( {\sqrt {\frac{\mathit{var_{\psi_{i}^l}}}{8}} } \right)}\right].
		\end{split}
		\end{equation}
	\end{figure*}
	\begin{figure*}[h]
		\begin{equation}\label{equ:pilW}
		\begin{split}
		P_i^l({\theta _j} \to W) =& \sum\nolimits_{{x_l} \ne {\theta _j}} {P({x_l})P({\theta _j}|{x_l})} \\
		=& \frac{1}{2}\int_0^\infty  {{p_\psi }(\psi _i^l|{x_l} =  - {\theta _j} =  - 1 )d\psi _i^l}  + \frac{1}{2}\int_{ - \infty }^0 {{p_\psi }(\psi _i^l|{x_l} =  - {\theta _j} = 1 )d\psi _i^l}\\
		\overset{(a)}{=}& \frac{1}{2{\sqrt {2\pi \mathit{var_{\psi_{i}^l}}} }}\left\lbrace \int_0^\infty{\exp \left( { - \frac{{{{\left( {\psi _i^l + \mathit{var_{\psi_{i}^l}}/2} \right)}^2}}}{{2\mathit{var_{\psi_{i}^l}}}}} \right)d\psi _i^l}+\int_{ - \infty }^0{\exp \left( { - \frac{{{{\left( {\psi _i^l -  \mathit{var_{\psi_{i}^l}}/2} \right)}^2}}}{{2\mathit{var_{\psi_{i}^l}}}}} \right)d\psi _i^l}\right\rbrace\\
		\overset{(b)}{=}& \frac{1}{4}\left[ {\mathit{erfc}\left( {\sqrt {\frac{\mathit{var_{\psi_{i}^l}}}{8}} } \right)} \right] + \frac{1}{4}\left[ {\mathit{erfc}\left( {\sqrt {\frac{\mathit{var_{\psi_{i}^l}}}{8}} } \right)} \right]\\
		=& \frac{1}{2}{\mathit{erfc}\left( {\sqrt {\frac{\mathit{var_{\psi_{i}^l}}}{8}} } \right)}.
		\end{split}
		\end{equation}
		\hrulefill
	\end{figure*}
	
	\textup{Hence, we can conclude the probability $P_i^{l}(\theta_j)$ estimated at VN $v_{l}$ as (\ref{equ:PilP}) and (\ref{equ:PilW}), which discuss the probability $P_i^{l}(\theta_j)$ based on $P_i^l({\theta _j} \to R)$ and $P_i^l({\theta _j} \to W)$, as well as the uncertain transmit symbol $x_l=\mp 1$.}
	\QEDA
\end{proof}

{\textbf{\textit{Theorem} \ref{the:2}}} presents the expression of probability $P_i^{l}(\theta_j)$, based on which we can further deduct the ECV $\mathit{var_{\omega_i^l}}$ calculated at AIC.

\begin{theorem}\label{the:3}
	\textup{According to (\ref{equ:ECVori}) in {\textbf{\textit{Lemma} \ref{lem:1}}}, the ECV $\mathit{var_{\omega_i^l}}$ calculated at AIC can be calculated as}
	\begin{equation}\label{equ:ECVfed}
	\mathit{var_{\omega_i^l}} = \frac{{2h_{i,l}^2}}{{\sum\nolimits_{l' = 1,l' \ne l}^{{N_T}} {\left| {h_{i,l'}}\right|^2 \mathbb{V}(x_{l'})+\sigma _n^2}}},
	\end{equation}
	\textup{where}
	\begin{equation}\label{equ:Vxl}
	\mathbb{V}({x_l}) = \left\{ \!{\begin{array}{*{20}{l}}
		\!{1,{\:}t = 1},\\
		\!{\!\left[\!1\!+\!\mathit{erf\!}\!\left(\! {\sqrt {\frac{{\mathit{var\!_{\psi_i^l}}}}{8}} }\! \right)\right]\!\! \mathit{erfc\!}\!\left(\! {\sqrt {\frac{{\mathit{var\!_{\psi_i^l}}}}{8}} }\! \right),{\:}t\!>\!1},
		\end{array}} \right.
	\end{equation}
	\textup{in which the first line is calculated according to the first-iteration variance $\mathbb{V}(x_{l})$ in (\ref{equ:sigit1}), and $t$ denotes the iteration times during the MIMO detection.}
\end{theorem}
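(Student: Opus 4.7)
The plan is to substitute the intermediate quantities from Lemma~\ref{lem:1}, the Gaussian-approximation variance in (\ref{equ:varG}), and the probabilities from Theorem~\ref{the:2} into a single closed-form expression, handling the first iteration and later iterations separately because the a priori probability of $x_l$ differs in the two regimes.

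First I would rewrite the starting point. Lemma~\ref{lem:1} gives $\mathit{var_{\omega_i^l}} \propto h_{i,l}^2/\sigma_{g_{il}}^2$, and (\ref{equ:varG}) expresses $\sigma_{g_{il}}^2$ as a weighted sum of the symbol variances $\mathbb{V}(x_{l'})$ plus $\sigma_n^2$. Plugging (\ref{equ:varG}) directly into (\ref{equ:ECVori}) reduces the task of proving (\ref{equ:ECVfed}) to evaluating $\mathbb{V}(x_{l'})$ in the two cases identified in (\ref{equ:Vxl}). The normalization constant in the numerator follows from (\ref{equ:ECVori}), so only the denominator requires new work.

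For the first-iteration case ($t=1$), the a priori probabilities $P_i^{l}(\pm 1)$ are initialized to $1/2$, so (\ref{equ:Exl}) gives $\mathbb{E}(x_{l})=0$ and (\ref{equ:Varxl}) collapses to the scalar computation already carried out in (\ref{equ:sigit1}), yielding $\mathbb{V}(x_l)=1$. This branch is essentially a substitution.

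The main work is the subsequent-iteration branch ($t>1$). The approach is to insert the probabilities (\ref{equ:PilP})--(\ref{equ:PilW}) from Theorem~\ref{the:2} into (\ref{equ:Exl}) and (\ref{equ:Varxl}) with $\theta_j \in \{+1,-1\}$. Because $\theta_j^2=1$, we have $\mathbb{V}(x_l)=1-|P_i^l(+1)-P_i^l(-1)|^2$, so the computation reduces to evaluating the signed probability gap. In both sub-cases $x_l=+1$ and $x_l=-1$, this gap equals $\pm\,\mathit{erf}\bigl(\sqrt{\mathit{var_{\psi_i^l}}/8}\bigr)$, and its square is the same in either sub-case, confirming that the resulting $\mathbb{V}(x_l)$ is independent of the underlying transmitted symbol. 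The final step is to apply the identity $1-\mathit{erf}^2(u)=\bigl[1+\mathit{erf}(u)\bigr]\,\mathit{erfc}(u)$, which uses only $\mathit{erfc}(u)=1-\mathit{erf}(u)$ from the definitions following Theorem~\ref{the:2}. This produces exactly the second line of (\ref{equ:Vxl}) and completes the derivation.

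The main obstacle I anticipate is bookkeeping rather than a deep difficulty: one must be careful that the probabilities (\ref{equ:PilP})--(\ref{equ:PilW}) are conditioned on the (unknown) true symbol $x_l$, yet the resulting $\mathbb{V}(x_l)$ must be independent of that conditioning in order for (\ref{equ:Vxl}) to be a well-defined quantity used at the AIC, which has no access to the true symbol. Verifying that the two sub-cases of Theorem~\ref{the:2} produce identical values of $\mathbb{V}(x_l)$ via the square in (\ref{equ:Varxl}) is the crucial consistency check, and the $\mathit{erf}$/$\mathit{erfc}$ identity is what makes the expression collapse to a single closed form.
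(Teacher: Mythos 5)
Your proposal follows essentially the same route as the paper's proof: case-split on $x_l=\pm 1$, compute $\mathbb{E}(x_l)=\pm\,\mathit{erf}\left(\sqrt{\mathit{var_{\psi_i^l}}/8}\right)$ from the probabilities of Theorem~\ref{the:2}, obtain $\mathbb{V}(x_l)=1-\mathit{erf}^2(\cdot)=\left[1+\mathit{erf}(\cdot)\right]\mathit{erfc}(\cdot)$, verify the result is independent of the transmitted symbol, and substitute into (\ref{equ:varG}) and (\ref{equ:ECVori}). The only point you leave unaddressed --- which the paper's own proof also glosses over --- is the change of the numerator coefficient from $4$ in (\ref{equ:ECVori}) to $2$ in (\ref{equ:ECVfed}), which your remark that the normalization ``follows from (\ref{equ:ECVori})'' inherits as-is.
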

\begin{proof}
	\textup{The variance $\mathbb{V}({x_l})$ in the first iteration is calculated as in (\ref{equ:sigit1}), thus we mainly derive $\mathbb{V}({x_l})$ when $t>1$ in this theorem.}
	
	\textup{based on {\textbf{\textit{Theorem} \ref{the:2}}}  and {\textbf{\textit{Lemma} \ref{lem:1}}} we discuss the ECV $\mathit{var_{\omega_i^l}}$ calculated at AIC on two cases, i.e., the transmit symbol $x_l = 1$ and $x_l = -1$. The specific derivation is shown as follows:}
	\begin{itemize}
		\item[\textit{C1:}] \textup{The transmit symbol $x_l = 1$. With the probabilities $P_i^l(+1)$ estimated at VN $v_l$ shown as (\ref{equ:PilP}) and (\ref{equ:PilW}), we can obtain the mean value $\mathbb{E}(x_l^+)$ of $x_l$ according to (\ref{equ:Exl}) as}
		\begin{equation}
		\begin{split}
		\mathbb{E}({x_{l^+}} ) =& \frac{+1}{{\!2 }}\! \left[\!1\!+\!\mathit{erf}\!\left(\! {\sqrt {\frac{{\mathit{var_{\psi_i^l}}}}{8}} } \right)\!\right]  \!-\! \frac{1}{{\!2 }} \mathit{erfc}\!\left(\! {\sqrt {\frac{{\mathit{var_{\psi_i^l}}}}{8}} } \right)\\
		=& \mathit{erf}\left( {\sqrt {\frac{{\mathit{var_{\psi_i^l}}}}{8}} } \right) .
		\end{split}
		\end{equation}
		
		\textup{Then based on (\ref{equ:Varxl}), the variance $\mathbb{V}(x_l^+)$ of $x_l$ can be given as}
		\begin{equation}\label{equ:Vpo}
		\begin{split}
		\mathbb{V}(x_l^+\!) =& \left( {1}\right) ^2{P_i^{l}}({1})\!+\!\left( {-1}\right) ^2{P_i^{l}}({-1})\!-\!\left| {\mathbb{E}(x_{l})}\right|^2\\
		=& \!\left[\! {P_i^{l}}({1})\!+\!{P_i^{l}}({-1})\!\right] \!-\! \left[\! {\mathit{erf}\left( \!{\sqrt {\frac{{\mathit{var_{\psi_i^l}}}}{8}} } \right)\!} \!\right]^2\\
		=& \! {1-\! {\mathit{erf}\left( \!{\sqrt {\frac{{\mathit{var_{\psi_i^l}}}}{8}} } \right)\!}^2} \\
		=& \!\left[ \!{1\!+\! {\mathit{erf}\left( \!{\sqrt {\frac{{\mathit{var_{\psi_i^l}}}}{8}} } \right)} }\right]\left[ \!{1\!-\! {\mathit{erf}\left( \!{\sqrt {\frac{{\mathit{var_{\psi_i^l}}}}{8}} } \right)\!}}\right]\\
		=& \!\left[ \!{1\!+\! {\mathit{erf}\left( \!{\sqrt {\frac{{\mathit{var_{\psi_i^l}}}}{8}} } \right)\!} \!}\right]{\mathit{erfc}\left( \!{\sqrt {\frac{{\mathit{var_{\psi_i^l}}}}{8}} } \right)\!}.
		\end{split}
		\end{equation}
		\item[\textit{C2:}] \textup{The transmit symbol $x_l = -1$. With the probabilities $P_i^l(\mp 1)$ estimated at VN $v_l$ shown as (\ref{equ:PilP}) and (\ref{equ:PilW}), we can obtain the mean value $\mathbb{E}(x_l^-)$ of $x_l$ according to (\ref{equ:Exl}) as}
		\begin{equation}
		\begin{split}
		\mathbb{E}({x_{l^-}} ) =& \frac{-\!1}{{2 }}\!\left[\!1\!+\!\mathit{erf\!}\left(\! {\sqrt {\frac{{\mathit{var\!_{\psi_i^l}}}}{8}} } \right)\!\right]\!\!+\!\!\frac{1}{{2 }}\!\mathit{erfc\!}\left(\! {\sqrt {\frac{{\mathit{var\!_{\psi_i^l}}}}{8}} } \right)\\
		=& \frac{1}{{2 }}\left[{\mathit{erfc}\left(\!{\sqrt {\frac{{\mathit{var_{\psi_i^l}}}}{8}} } \right)\!-\!\mathit{erf}\left(\!{\sqrt {\frac{{\mathit{var_{\psi_i^l}}}}{8}} } \right)\!-\!1} \right]\\
		=& -\mathit{erf}\left( {\sqrt {\frac{{\mathit{var_{\psi_i^l}}}}{8}} } \right) .
		\end{split}
		\end{equation}
		
		\textup{Similarly, the variance $\mathbb{V}(x_l^-)$ of $x_l$ is expressed as}
		\begin{equation}
		\begin{split}
		\mathbb{V}(x_l^-) =&1 \!-\! \left[ {\mathit{erf}\left( {\sqrt {\frac{{\mathit{var_{\psi_i^l}}}}{8}} } \right)} \right]^2\\
		=& \!\left[ \!{1\!+\! {\mathit{erf}\left( \!{\sqrt {\frac{{\mathit{var_{\psi_i^l}}}}{8}} } \right)\!} }\right]{\mathit{erfc}\left( \!{\sqrt {\frac{{\mathit{var_{\psi_i^l}}}}{8}} } \right)\!}.
		\end{split}
		\end{equation}
		\textup{which is identical to (\ref{equ:Vpo}).}
	\end{itemize}
	
	\textup{Therefore, we can conclude that the variance $\mathbb{V}(x_l)$ is unrelated to the value of transmit symbol $x_l$, and can be represented identically as (\ref{equ:Vxl}). Then substitue (\ref{equ:Vxl}) into (\ref{equ:varG}) and (\ref{equ:ECVori}), we can acquire this {{{theorem}}}.}
	\QEDA
\end{proof}

Consequently, given the variance $\mathit{var_{\omega_i^l}}$ at AIC as (\ref{equ:ECVfed}) and (\ref{equ:Vxl}) in {\textbf{\textit{Theorem} \ref{the:3}}}, the mutual information $I_{\omega_{i}^l}$ can be calculated according to (\ref{equ:Iomega}) at the EIC (i.e., ON sub-detector). Then the mutual information $I_{\psi_i^l}$ at the VN sub-detector can be computed as (\ref{equ:Ipsi}). In addition, the mutual information $I_{L_l}$ between the LLR $L_l$ output by the VN sub-detector and transmit symbol $x_l$ is given as
\begin{equation}\label{equ:ILl}
I_{L_l} \buildrel \Delta \over = \mathit{F_{L}}\left({{\mathit{var_{\omega_i^l}}}}\right)= J\left( {\sqrt{\sum_{i'=1}^{N_R}{\mathit{var_{\omega_{i'}^l}}}}} \right),
\end{equation} 
then the averaged mutual information (AMI) of the VNs $I_L$ can be expressed as
\begin{equation}\label{equ:ILA}
I_{L} = \frac{1}{N_T}\sum_{l=1}^{N_T}{I_{L_l}}.
\end{equation} 

To sum up, the error functions-aided analysis (EF-AA) mechanism for FG-based iterative MIMO detections under the BPSK modulation can be summarized as Algorithm \ref{alg:EXIT}.
\renewcommand{\algorithmicrequire}{\textbf{Input:}}
\renewcommand{\algorithmicensure}{\textbf{Output:}}
\begin{algorithm}
	\caption{The EF-AA mechanism for iterative MIMO detections (BPSK modulation).}\label{alg:EXIT}
	\begin{algorithmic}[1] 
		\Require the channel matrix ${\boldsymbol{H}}$ estimated at receiver, the variance (power) of noise $\sigma_n^2$, the maximum iteration number $T$ for MIMO detections.
		\State {Initialize the counter for the iterative MIMO detection as $t=1$.}
		\Repeat
		\For {$l=1$ to $N_T$}
		\For {$i=1$ to $N_R$}
		\State{Calculate the variance $\mathit{var_{\omega_i^l}}$ according to (\ref{equ:ECVfed}) and (\ref{equ:Vxl}) at AIC, then transfer it to the EIC.}
		\State{Update the mutual information $I_{\omega_{i}^l}$ as (\ref{equ:Iomega}) at EIC.}
		\State{Update the variance $\mathit{var_{\psi_i^l}}$ as (\ref{equ:varpsi}) at EIC, then transfer it to the VN sub-detector.}
		\State{Compute the mutual information $I_{\psi_i^l}$ based on (\ref{equ:Ipsi}) at the VN sub-detector.}
		\EndFor
		\State{Calculate the mutual information $I_{L_l}$ at the VN sub-detector according to (\ref{equ:ILl}).}
		\EndFor
		\State{Calculate the averaged output mutual information $I_{L}$ as (\ref{equ:ILA}).}
		\State{Update counter for the iterative MIMO detection as $t=t+1$.}
		\Until{The maximum iteration number $T$ is reached.}
		\Ensure The averaged output mutual information $I_{L}$ of VNs in each iteration.
	\end{algorithmic}
\end{algorithm}
\subsection{The Error Functions-Aided Analysis under QPSK Modulation}
According to \cite{23:MIMOR}, the complex-domain Rayleigh fading MIMO channel (\ref{equ:HCOM}) can be converted into an equivalent real-domain form, which extracts the real and imaginary part of each entry in (\ref{equ:HCOM}), respectively. The conversion can be presented by
\begin{equation}\label{equ:HR}
{{\boldsymbol{H}}_R} = \left[ {\begin{array}{*{20}{c}}
	{\Re ({\boldsymbol{H}})}&{ - \Im ({\boldsymbol{H}})}\\
	{\Im ({\boldsymbol{H}})}&{ - \Re ({\boldsymbol{H}})}
	\end{array}} \right],
\end{equation}
where $\Re\left(\cdot \right) $ and $\Im\left(\cdot \right)$ denote the real part and imaginary part of elements, respectively. Then the corresponding vectors in (\ref{equ:HCOM}) can be transformed by
\begin{equation}\label{equ:yR}
{{\boldsymbol{y}}_R} = \left[ {\begin{array}{*{20}{c}}
	{\Re ({\boldsymbol{y}})}\\
	{\Im ({\boldsymbol{y}})}
	\end{array}} \right],
\end{equation}
\begin{equation}\label{equ:xR}
{{\boldsymbol{x}}_R} = \left[ {\begin{array}{*{20}{c}}
	{\Re ({\boldsymbol{x}})}\\
	{\Im ({\boldsymbol{x}})}
	\end{array}} \right],
\end{equation}
and
\begin{equation}\label{equ:nR}
{{\boldsymbol{n}}_R} = \left[ {\begin{array}{*{20}{c}}
	{\Re ({\boldsymbol{n}})}\\
	{\Im ({\boldsymbol{n}})}
	\end{array}} \right],
\end{equation}  
where each element corresponds to the information bit before QPSK modulation.

Therefore, through the aforementioned real-domain conversion, this paper extends the BPSK-modulated EF-AA mechanism to QPSK-modulated scenarios. In the following, the subscripts of matrix $\boldsymbol{H}_R$ and vectors $\boldsymbol{x}_R$, $\boldsymbol{y}_R$, $\boldsymbol{n}_R$ is omitted to better illustrate the derivation process.

Specifically, the complex QPSK symbol $x_l = \theta \in {1}/{\sqrt 2}\left\lbrace {1+1i,1-1i,-1+1i,-1-1i}\right\rbrace $ can be transfomred into the real form $x_{l}= \theta_R \in {1}/{\sqrt 2}\left\lbrace {1,-1}\right\rbrace$. Then the Gaussian approximation of the received symbol $y_{i}$ is revised as
\begin{equation}\label{equ:GAR}
\begin{split}
{y_{i}} &= \sum\nolimits_{l = 1}^{2N_T} {{h_{i,l}}{x_{l}}}  + {n_{i}}\\
&= {h_{i,l}}{x_{l}} + \underbrace {\sum\nolimits_{l' = 1,l' \ne l}^{2N_T} {{h_{i,l'}}{x_{l'}}}  + {n_{i}}}_{{g_{il}}},
\end{split}
\end{equation}
correspondingly, the variance $\mathbb{V}(x_{l})$ in the first iteration can be given as
\begin{equation}\label{equ:sigit1R}
\begin{split}
\mathbb{V}(x_{l})=& \frac{1}{2}\!\left[\!{{\left( \frac{1}{\sqrt{2}}\right)}^2\!+\!{\left( \frac{-1}{\sqrt{2}}\right)}^2}\!\right]\!-\!\left| \frac{1}{2}\left( {\frac{1}{\sqrt{2}}\!-\!\frac{1}{\sqrt{2}}}\right) \right|^2\\
=& \frac{1}{2}.
\end{split} 
\end{equation}

Under the QPSK modulation, the ECV $\mathit{var_{\omega_i^l}}$ defined in {\textbf{\textit{Lemma} \ref{lem:1}}} and variance $\mathit{var_{\psi_i^l}}$ in {\textbf{\textit{Lemma} \ref{lem:2}}} are computed as
\begin{equation}\label{equ:ECVoriR}
\mathit{var_{\omega_i^l}} = \frac{{2h_{i,l}^{2}}}{{\sigma _{{g_{il}}}^{2}}},
\end{equation}
and
\begin{equation}\label{equ:varpsiR}
\mathit{var_{\psi_i^l}}=\sum\nolimits_{i'=1,i'\ne i}^{2N_R}{J^{-1}\left({I_{\omega_i^{l'}}}\right)},
\end{equation}
respectively.

In addition, (\ref{equ:PilP}) and (\ref{equ:PilW}) in {\textbf{\textit{Theorem} \ref{the:2}}} need to be modified as
\begin{equation}\label{equ:PilPR}
\begin{split}
P_i^l( + 1/\sqrt 2 ) =& P_i^l(\theta_j =  + 1/\sqrt 2 )\\
=& \left\{ {\begin{array}{*{20}{l}}
	{\frac{1}{2}\left[ 1+\mathit{erf}\left( {\sqrt {{\mathit{var_{\psi_i^l}}}/{8}} } \right)\right] ,if\:{x_l} =  + 1/\sqrt 2 ,}\\
	{\frac{1}{2}\mathit{erfc}\left( {\sqrt {{\mathit{var_{\psi_i^l}}}/{8}} } \right),if\:{x_l} =  - 1/\sqrt 2,}
	\end{array}} \right.
\end{split}
\end{equation}
\textup{and}
\begin{equation}\label{equ:PilWR}
\begin{split}
P_i^{l}(- 1/\sqrt 2) =& P_i^l(\theta_j =  - 1/\sqrt 2 )\\
=& \left\{ {\begin{array}{*{20}{l}}
	{\frac{1}{2}\left[ 1+\mathit{erf}\left( {\sqrt {{\mathit{var_{\psi_i^l}}}/{8}} } \right)\right],if\:{x_l} =  - 1/\sqrt 2 ,}\\
	{\frac{1}{2}\mathit{erfc}\left( {\sqrt {{\mathit{var_{\psi_i^l}}}/{8}} } \right),if\:{x_l} =  + 1/\sqrt 2,}
	\end{array}} \right.
\end{split}
\end{equation}

Therefore, the ECV $\mathit{var_{\omega_i^l}}$ under the QPSK-modulated scenario can be calculated as in {\textbf{\textit{Theorem} \ref{the:4}}}.
\begin{theorem}\label{the:4}
	\textup{Under the QPSK-modulated scenario, the ECV $\mathit{var_{\omega_i^l}}$ calculated at AIC is given as}
	\begin{equation}\label{equ:ECVfedR}
	\mathit{var_{\omega_i^l}} = \frac{{4h_{i,l}^2}}{{\!\sum\limits_{l' = 1,l' \ne l}^{{2N_T}}\! {\left| {h_{i,l'}}\right|^2 \!{\left[\!1\!+\!\mathit{erf\!}\!\left(\! {\sqrt {\frac{{\mathit{var\!_{\psi_i^l}}}}{8}} }\! \right)\right]\!\! \mathit{erfc\!}\!\left(\! {\sqrt {\frac{{\mathit{var\!_{\psi_i^l}}}}{8}} }\! \right)\!+\!2\sigma _n^2}}}}.
	\end{equation}
\end{theorem}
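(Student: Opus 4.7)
The plan is to mimic the structure of the proof of Theorem~\ref{the:3}, replacing the BPSK alphabet $\{\pm 1\}$ by the real-domain QPSK alphabet $\{\pm 1/\sqrt{2}\}$ and then tracking how the $1/\sqrt{2}$ amplitude factor propagates. First I would note that after the real-domain conversion (\ref{equ:HR})--(\ref{equ:nR}), the Gaussian approximation (\ref{equ:GAR}) has exactly the same form as (\ref{equ:GA}), with the sum running over $2N_T$ indices instead of $N_T$, and with the per-symbol probabilities at a VN given by (\ref{equ:PilPR})--(\ref{equ:PilWR}). Hence (\ref{equ:meanG})--(\ref{equ:Varxl}) remain valid structurally, and only the arithmetic evaluation of $\mathbb{E}(x_{l'})$ and $\mathbb{V}(x_{l'})$ must be redone for $\theta \in \{+1/\sqrt{2},-1/\sqrt{2}\}$.

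Next I would split into the same two cases as in the proof of Theorem~\ref{the:3}. For \textit{C1}, $x_l=+1/\sqrt{2}$, substituting (\ref{equ:PilPR}) and (\ref{equ:PilWR}) into (\ref{equ:Exl}) gives
\begin{equation*}
\mathbb{E}(x_l^+) = \tfrac{1}{\sqrt{2}}\cdot\tfrac{1}{2}\bigl[1+\mathit{erf}(\sqrt{\mathit{var_{\psi_i^l}}/8})\bigr] - \tfrac{1}{\sqrt{2}}\cdot\tfrac{1}{2}\,\mathit{erfc}(\sqrt{\mathit{var_{\psi_i^l}}/8}) = \tfrac{1}{\sqrt{2}}\,\mathit{erf}(\sqrt{\mathit{var_{\psi_i^l}}/8}),
\end{equation*}
because $\tfrac{1}{2}[1+\mathit{erf}]-\tfrac{1}{2}\mathit{erfc}=\mathit{erf}$. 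Feeding this into (\ref{equ:Varxl}) with $\theta_j^2 = 1/2$ for both symbols yields
\begin{equation*}
\mathbb{V}(x_l^+) = \tfrac{1}{2}\bigl[P_i^l(+1/\sqrt{2})+P_i^l(-1/\sqrt{2})\bigr]-\tfrac{1}{2}\,\mathit{erf}^2(\sqrt{\mathit{var_{\psi_i^l}}/8}) = \tfrac{1}{2}\bigl[1+\mathit{erf}(\sqrt{\mathit{var_{\psi_i^l}}/8})\bigr]\mathit{erfc}(\sqrt{\mathit{var_{\psi_i^l}}/8}),
\end{equation*}
which is precisely one half of the BPSK variance established in (\ref{equ:Vpo}). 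Case \textit{C2} with $x_l=-1/\sqrt{2}$ is handled symmetrically and gives the same expression, so $\mathbb{V}(x_l)$ is again independent of the transmitted symbol.

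Finally I would substitute this variance together with (\ref{equ:varG}) (with upper index $2N_T$) into (\ref{equ:ECVoriR}), obtaining
\begin{equation*}
\mathit{var_{\omega_i^l}} = \frac{2 h_{i,l}^2}{\sum_{l'\neq l}^{2N_T}|h_{i,l'}|^2\cdot\tfrac{1}{2}[1+\mathit{erf}(\sqrt{\mathit{var_{\psi_i^l}}/8})]\mathit{erfc}(\sqrt{\mathit{var_{\psi_i^l}}/8})+\sigma_n^2},
\end{equation*}
and multiplying numerator and denominator by $2$ immediately produces (\ref{equ:ECVfedR}). The only place any real care is needed is in the algebra of the mean computation and in the factor-of-two bookkeeping between (\ref{equ:ECVoriR}) (which already carries a $2$ instead of the BPSK $4$) and the final $4h_{i,l}^2$ / $2\sigma_n^2$ scaling of (\ref{equ:ECVfedR}); I expect this bookkeeping to be the main point where a reader could get confused, so the write-up should explicitly track the two factors of $1/\sqrt{2}$ coming from the QPSK alphabet and the rescaling applied at the end.
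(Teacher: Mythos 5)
Your proposal is correct and follows essentially the same route as the paper's own proof: the same two-case computation of $\mathbb{E}(x_l^{\pm})$ and $\mathbb{V}(x_l^{\pm})$ over the real-domain alphabet $\{\pm 1/\sqrt{2}\}$, the observation that the variance is symbol-independent and equals half the BPSK expression, and the final factor-of-two rescaling that yields (\ref{equ:ECVfedR}). The only detail the paper adds that you omit is the remark that initializing $\mathit{var_{\psi_i^l}}=0$ makes the unified formula reproduce the first-iteration variance $1/2$ of (\ref{equ:sigit1R}), so that no separate $t=1$ case is needed.
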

\begin{proof}
	\textup{Similarly as in {\textbf{\textit{Theorem} \ref{the:3}}}, under the QPSK modulation, the ECV $\mathit{var_{\omega_i^l}}$ in the subsequent iterations at AIC can be calculated depending on the following two cases:}
	\begin{itemize}
		\item[\textit{C1:}] \textup{The transmit symbol $x_l = 1/\sqrt{2}$. With the probabilities $P_i^l(\mp 1/\sqrt{2})$ estimated at VN $v_l$ shown as (\ref{equ:PilPR}) and (\ref{equ:PilWR}), the mean value $\mathbb{E}(x_l^+)$ and variance $\mathbb{V}(x_l^+)$ of $x_l$ can be expressed as}
		\begin{equation}
		\begin{split}
		\mathbb{E}({x_{l^+}} ) =& \frac{+1}{{\!2\sqrt 2 }}\! \left[\!1\!+\!\mathit{erf}\!\left(\! {\sqrt {\frac{{\mathit{var_{\psi_i^l}}}}{8}} } \right)\!\right]  \!-\! \frac{1}{{\!2\sqrt 2 }} \mathit{erfc}\!\left(\! {\sqrt {\frac{{\mathit{var_{\psi_i^l}}}}{8}} } \right)\\
		=& \frac{1}{{\sqrt 2 }}\mathit{erf}\!\left( {\sqrt {\frac{{\mathit{var_{\psi_i^l}}}}{8}} } \right),
		\end{split}
		\end{equation}
		\textup{and}
		\begin{equation}\label{equ:VpoR}
		\begin{split}
		\mathbb{V}(x_l^+\!) =& \left( \frac{1}{\sqrt{2}}\right) ^2{P_i^{l}}(\frac{1}{\sqrt{2}})\!+\!\left( \frac{-1}{\sqrt{2}}\right) ^2{P_i^{l}}(\frac{-1}{\sqrt{2}})\!-\!\left| {\mathbb{E}(x_{l})}\right|^2\\
		=& \frac{1}{2}\!\left[\! {P_i^{l}}(\frac{-1}{\sqrt{2}})\!+\!{P_i^{l}}(\frac{1}{\sqrt{2}})\!\right] \!-\! \frac{1}{{ 2 }}\left[\! {\mathit{erf}\left( \!{\sqrt {\frac{{\mathit{var_{\psi_i^l}}}}{8}} } \right)\!} \!\right]^2\\
		=& \frac{1}{2}\!\left[ {1-\! {\mathit{erf}\left( \!{\sqrt {\frac{{\mathit{var_{\psi_i^l}}}}{8}} } \right)\!}^2}\right] \\
		=& \frac{1}{2}\!\left[ \!{1\!+\! {\mathit{erf}\left( \!{\sqrt {\frac{{\mathit{var_{\psi_i^l}}}}{8}} } \right)\!} \!}\right]{\mathit{erfc}\left( \!{\sqrt {\frac{{\mathit{var_{\psi_i^l}}}}{8}} } \right)\!},
		\end{split}
		\end{equation}
		\textup{respectively.}
		\item[\textit{C2:}] \textup{The transmit symbol $x_l = -1/\sqrt{2}$. With the probabilities $P_i^l(\mp 1/\sqrt{2})$ estimated at VN $v_l$ shown as (\ref{equ:PilPR}) and (\ref{equ:PilWR}), we can obtain the mean value $\mathbb{E}(x_l^-)$ and variance $\mathbb{V}(x_l^-)$ of $x_l$ as}
		\begin{equation}
		\begin{split}
		\mathbb{E}({x_{l^-}} ) =& \frac{-\!1}{{2\sqrt 2 }}\!\left[\!1\!+\!\mathit{erf\!}\left(\! {\sqrt {\frac{{\mathit{var\!_{\psi_i^l}}}}{8}} } \right)\!\right]\!\!+\!\!\frac{1}{{2\sqrt 2 }}\!\mathit{erfc\!}\left(\! {\sqrt {\frac{{\mathit{var\!_{\psi_i^l}}}}{8}} } \right)\\
		=& -\frac{1}{{\sqrt 2 }}\mathit{erf}\left( {\sqrt {\frac{{\mathit{var_{\psi_i^l}}}}{8}} } \right),
		\end{split}
		\end{equation}
		\textup{and}
		\begin{equation}
		\begin{split}
		\mathbb{V}(x_l^-) =&\frac{1}{2} \!-\! \frac{1}{{2}}\left[ {\mathit{erf}\left( {\sqrt {\frac{{\mathit{var_{\psi_i^l}}}}{8}} } \right)} \right]^2\\
		=& \frac{1}{2}\!\left[ \!{1\!+\! {\mathit{erf}\left( \!{\sqrt {\frac{{\mathit{var_{\psi_i^l}}}}{8}} } \right)\!} }\right]{\mathit{erfc}\left( \!{\sqrt {\frac{{\mathit{var_{\psi_i^l}}}}{8}} } \right)\!},
		\end{split}
		\end{equation}
		\textup{respectively.}
	\end{itemize}
	
	\textup{It is obvious that under the QPSK modulation, the variance $\mathbb{V}(x_l)$ of $x_l$ in subsequent iterations is irrelevant to the specific sign of $x_l$. Additionally, through initializing the variance of extrinsic LLR $\psi_i^l$ as $\mathit{var_{\psi_{i}^l}}=0$, we can uniformly express the variance $\mathbb{V}(x_l)$ under different iterations as}
	\begin{equation}\label{equ:VXLR}
	\mathbb{V}(x_l) =\frac{1}{2}\!\left[ \!{1\!+\! {\mathit{erf}\left( \!{\sqrt {\frac{{\mathit{var_{\psi_i^l}}}}{8}} } \right)\!} }\right]{\mathit{erfc}\left( \!{\sqrt {\frac{{\mathit{var_{\psi_i^l}}}}{8}} } \right)\!}.
	\end{equation}
	
	\textup{Therefore, by substituting (\ref{equ:VXLR}) into (\ref{equ:ECVfed}), the variance of LLR $\mathit{var_{\omega_i^l}}$ can be presented by (\ref{equ:ECVfedR}).} 
	\QEDA
\end{proof}

Referring to (\ref{equ:varpsiR}) and {\textbf{\textit{Theorem} \ref{the:3}}}, the EF-AA mechanism under QPSK modulation is concluded in Algorithm \ref{alg:EXIT2}. 
\renewcommand{\algorithmicrequire}{\textbf{Input:}}
\renewcommand{\algorithmicensure}{\textbf{Output:}}
\begin{algorithm}
	\caption{The EF-AA mechanism for iterative MIMO detections (QPSK modulation).}\label{alg:EXIT2}
	\begin{algorithmic}[1] 
		\Require the channel matrix ${\boldsymbol{H}}$ estimated at receiver, the variance (power) of noise $\sigma_n^2$, the maximum iteration number $T$ for MIMO detections.
		\State {Initialize the counter for the iterative MIMO detection as $t=1$, and the variance $\mathit{var_{\psi_i^l}}=0$.}
		\Repeat
		\For {$l=1$ to $2N_T$}
		\For {$i=1$ to $2N_R$}
		\State{Calculate the variance $\mathit{var_{\omega_i^l}}$ transferred from AIC to EIC according to (\ref{equ:ECVfedR}).}
		\State{Update the mutual information $I_{\omega_{i}^l}$ as (\ref{equ:Iomega}).}
		\State{Update the variance $\mathit{var_{\psi_i^l}}$ transferred from EIC to the VN sub-detector based on (\ref{equ:varpsiR}).}
		\State{Compute the mutual information $I_{\psi_i^l}$ as (\ref{equ:Ipsi}).}
		\EndFor
		\State{Calculate the mutual information $I_{L_l}$ as (\ref{equ:ILl}).}
		\EndFor
		\State{Calculate the averaged output mutual information $I_{L}$ as (\ref{equ:ILA}).}
		\State{Counter updation $t=t+1$.}
		\Until{The maximum iteration number $T$ is reached.}
		\Ensure The averaged output mutual information $I_{L}$ of VNs in each iteration.
	\end{algorithmic}
\end{algorithm}

\section{Numerical Results}
In this section, we utilize the proposed EF-AA mechanism to analyze the performance of FG-based iterative MIMO detections. Specifically, this section assesses the precision of the EF-AA mechanism from two aspects, i.e., the convergence under different iterations (by plotting AMI versus $I_{ter}$ in Fig. \ref{fig:MIIte}) and the convergence under different received SNRs (by plotting AMI versus $\rho_r$ in Fig. \ref{fig:MISNR}), respectively. More specifically, to verify the effectiveness of the proposed mutual information analysis method, the BER performance of the FG-based iterative MIMO detection under different iteration times and different received SNRs are presented in Fig. \ref{fig:BERIte} and Fig. \ref{fig:BERSNR}, respectively. In addition, the antennas at transmit and receive sides are equipped as $N_T=N_R=4,16$ for MIMO scenarios and $N_T=N_R=128,256$ for m-MIMO scenarios. In this section, the FG-BP iterative MIMO detection in \cite{8:FGEP} and QPSK modulation are adopted.

\begin{figure}[ht]
	\begin{center}
		\subfigure[{ $N_T=N_R=4,16.$}]{
			\includegraphics[width=1.76in,height=1.45in]{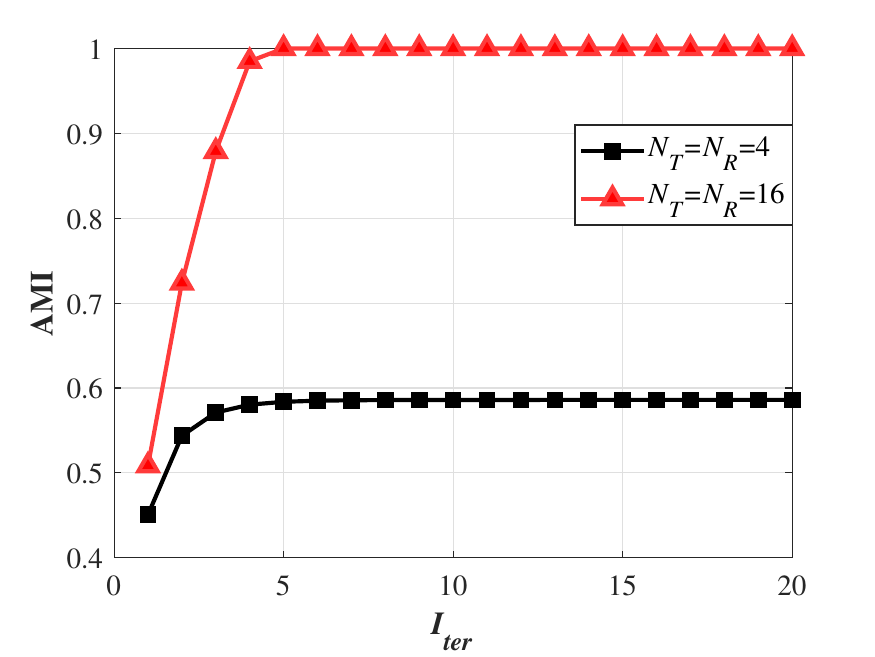}\label{fig:MIIte_a}
		}\hspace{-7.3mm}
		\subfigure[{ $N_T=N_R=128,256.$}]{
			\includegraphics[width=1.76in,height=1.45in]{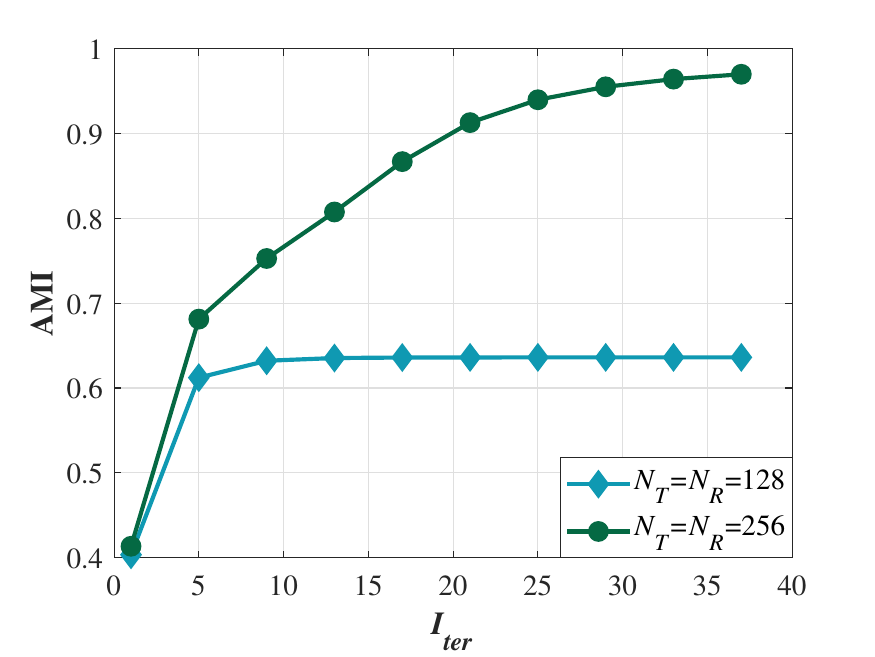}\label{fig:MIIte_b}
		}
		\caption{The AMI of the FG-based iterative MIMO detection under different iteration times for various antenna equipment (where $\rho_r=24$ dB and $\rho_r=4$ dB from left to right).}\label{fig:MIIte}
	\end{center}
\end{figure}
Fig. \ref{fig:MIIte} and Fig. \ref{fig:BERIte} show that the proposed EF-AA mechanism can evaluate the convergence of FG-based iterative MIMO detections accurately. For example, Fig. \ref{fig:MIIte_a} reveals that when the iteration number reaches 5, the AMI for both $N_T=N_R=4$ and $N_T=N_R=16$ approaches 1 identically. Moreover, the convergence performance of $N_T=N_R=4$ and $N_T=N_R=16$ in Fig. \ref{fig:MIIte_a} are consistent with the BER performance in Fig. \ref{fig:BERIte_a}, which tend to be steady after 5th iteration. Likewise, Fig. \ref{fig:MIIte_b} and Fig. \ref{fig:BERIte_b} present identical convergence characteristics in AMI and BER performance.
\begin{figure}[ht]
	\begin{center}
		\subfigure[$N_T=N_R=4,16.$]{
			\includegraphics[width=1.76in,height=1.45in]{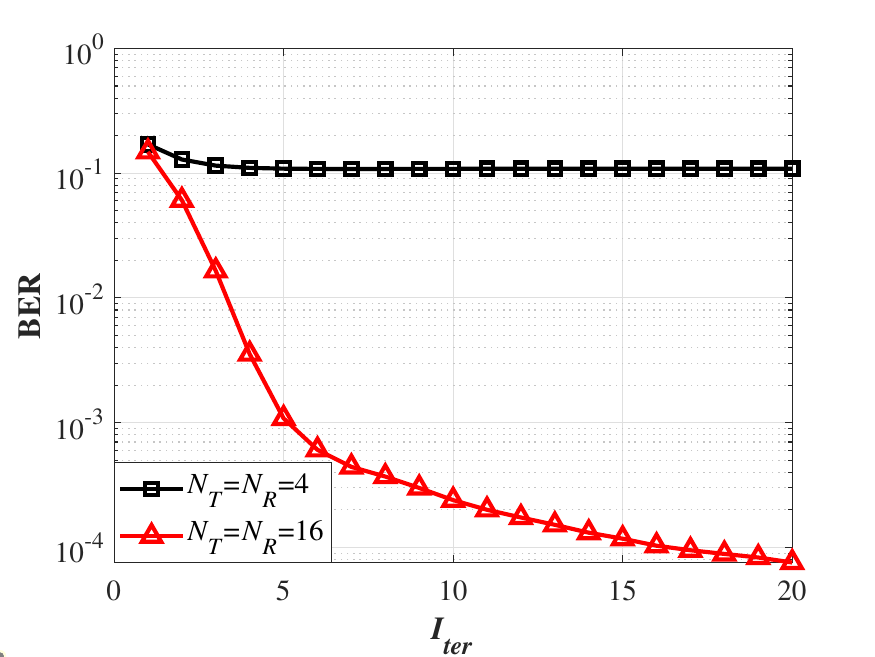}\label{fig:BERIte_a}
		}\hspace{-7.3mm}
		\subfigure[$N_T=N_R=128,256.$]{
			\includegraphics[width=1.76in,height=1.45in]{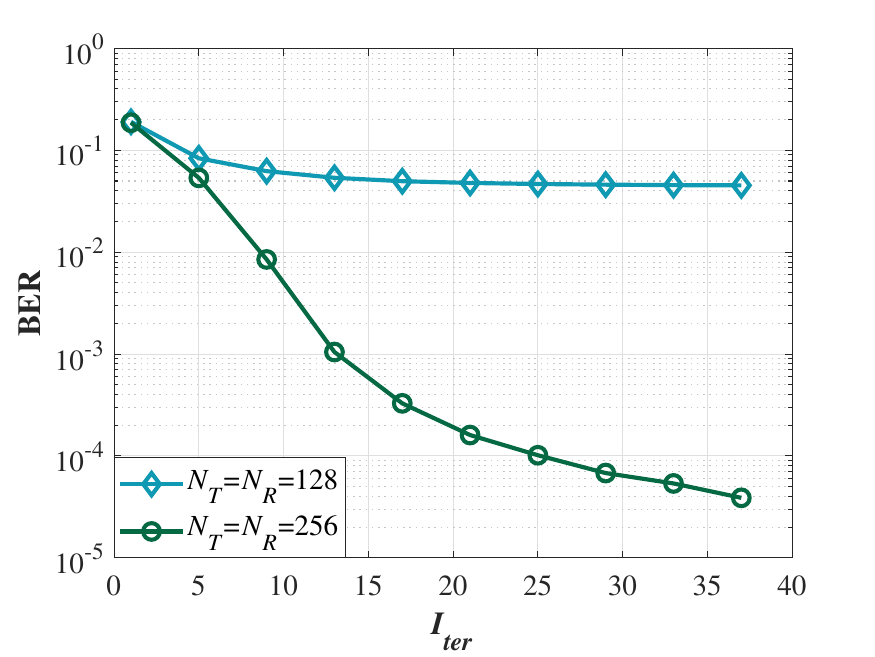}\label{fig:BERIte_b}
		}
		\caption{The BER performance of the FG-based iterative MIMO detection \cite{8:FGEP} under different iteration times for various antenna equipment (where $\rho_r=24$ dB and $\rho_r=4$ dB from left to right).}\label{fig:BERIte}
	\end{center}
\end{figure}

Additionally, Fig. \ref{fig:MISNRIte} is depicted to better demonstrate the evaluation accuracy of the proposed EF-AA mechanism on the convergence performance for the FG-based iterative MIMO detection. The left axis of Fig. \ref{fig:MISNRIte} exhibits the BER performance, and the right axis displays the characteristic of AMI. 
\begin{figure}
	\centering
	\includegraphics[width=0.5\textwidth]{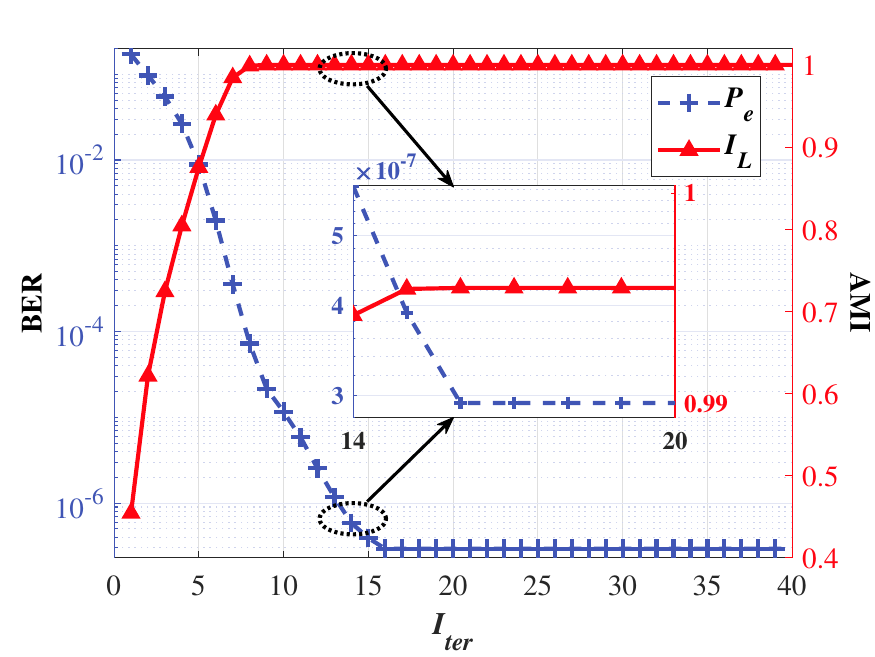}
	\caption{The specific comparison between the EF-AA mechanism-calculated AMI and BER performance of \cite{8:FGEP} (where $\rho_r=9$ dB).}\label{fig:MISNRIte}
\end{figure}

It can be seen from Fig. \ref{fig:MISNRIte} that with the increase of iteration number $I_{ter}$, the decline of $I_L$ (corresponding to AMI performance) and the growth of $P_e$ curve (corresponding to BER performance) are exactly homologous. Specifically, both the $I_L$ curve and $P_e$ curve tend to keep stable after the 16th iteration.

\begin{figure}
	\centering
	\includegraphics[width=0.5\textwidth]{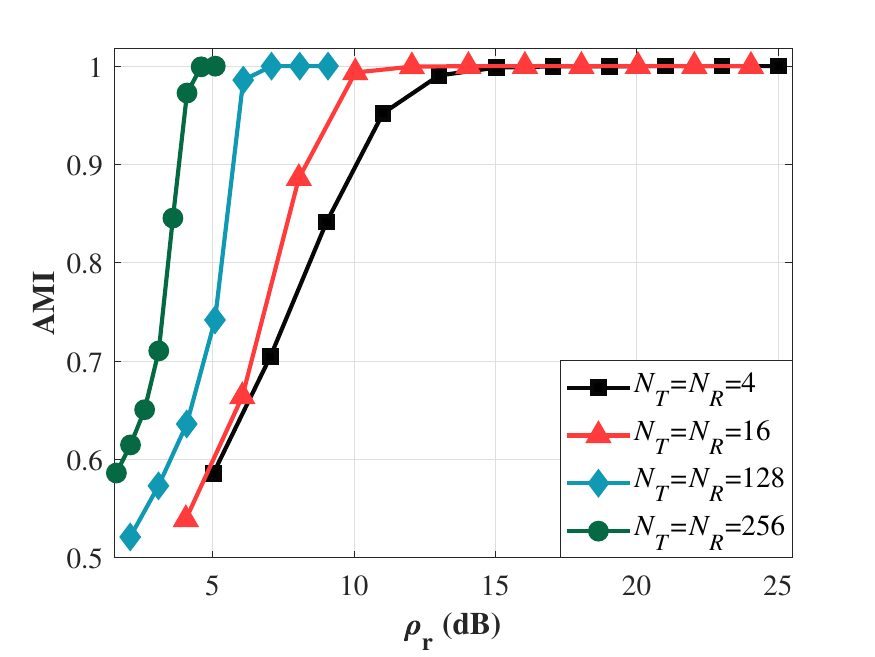}
	\caption{The EF-AA mechanism calculated-AMI versus received SNR $\rho_r$ of the FG-based iterative MIMO detection (where the maximum iteration number is $T=40$).}\label{fig:MISNR}
\end{figure}
\begin{figure}
	\centering
	\includegraphics[width=0.5\textwidth]{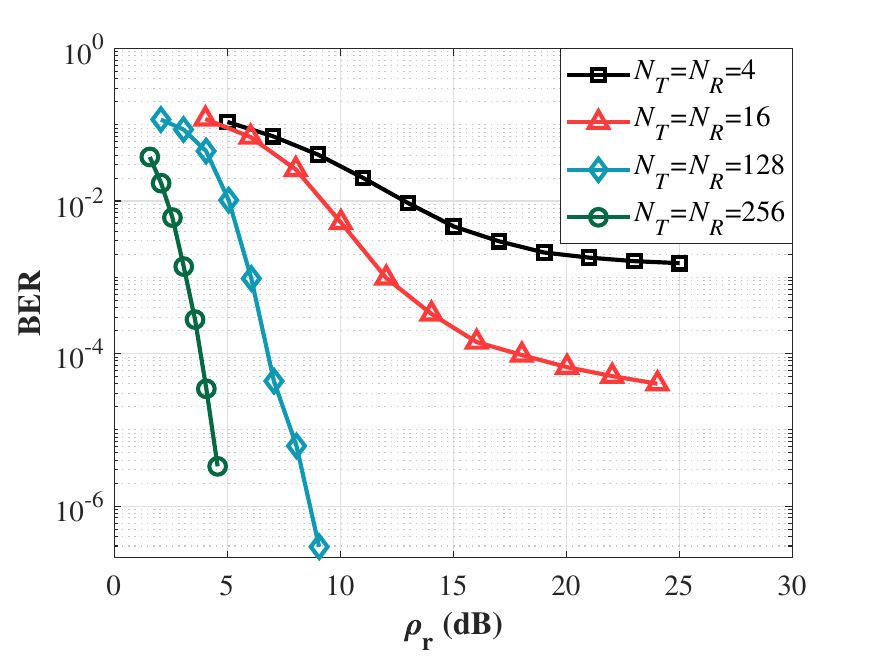}
	\caption{The BER performance of the FG-based iterative MIMO detection \cite{8:FGEP} versus received SNR $\rho_r$ (where the maximum iteration number is $T=40$).}\label{fig:BERSNR}
\end{figure}
Fig. \ref{fig:MISNR} and Fig. \ref{fig:BERSNR} compare the AMI computed by the EF-AA mechanism and the BER performance of the FG-based iterative MIMO detection \cite{8:FGEP} under different antenna types of equipment. It can be seen that as the growth of antennas, the AMI curve (in Fig. \ref{fig:MISNR}) rises faster, and the BER curve (in Fig. \ref{fig:BERSNR}) drops more dramatically. Additionally, Fig. \ref{fig:MISNR} and Fig. \ref{fig:BERSNR} also exhibit the effectiveness of the proposed EF-AA mechanism. From Fig. \ref{fig:MISNR}, the convergence SNR under different antennas can be concluded as $4.5,7,14,17$ dB (from left to right) respectively. Then Fig. \ref{fig:BERSNR} shows that under antenna $N_T=N_R=256,128$, when the received SNR $\rho_r$ reaches the convergence SNR in Fig. \ref{fig:MISNR}, the corresponding BERs tend to 0, which indicates the convergence of the FG-based iterative MIMO detection under m-MIMO scenarios. Then under antenna $N_T=N_R=16,4$, when the received SNR $\rho_r$ reaches the convergence SNR in Fig. \ref{fig:MISNR}, the corresponding BERs tend to be steady, which indicates the convergence of the FG-based iterative MIMO detection.

\section{Conclusions}
This paper investigated the mutual information update flow of the FG-based iterative MIMO detection, particularly focusing on the convergence performance under different iteration numbers and SNRs. In this paper, the EF-AA mechanism was proposed to provide exact mutual information curves through Gaussian approximation and closed-form calculation under BPSK and QPSK modulations. Numerical results of AMI and BER performance demonstrate that for the MIMO and m-MIMO scenarios, the proposed EF-AA mechanism can evaluate the convergence characteristic of FG-based iterative MIMO detections in both precise iteration times and received SNRs. It can be foreseen that the proposed EF-AA mechanism might have significant impacts both on performance evaluation and algorithm optimization.

\begin{appendices}
	\section{Curve Fitting Function $J\left(\cdot \right)$}
	As defined in \cite{24:MICAL}, the mutual information $I_\gamma$ between symbol $s$ and LLR $\gamma$ is given as
	\begin{equation}\label{equ:Iga}
	\begin{split}
	I_{\gamma} =& I({s};\gamma)\\
	=& \sum\limits_{j\!=\!1}^q\! {\int\limits_{ - \infty }^\infty \! {p\!\left( {s = {\theta _j}} \right)\!p\left( \!{\gamma |s = {\theta _j}} \right)\!{{\log }_2}\frac{{p\left( {\gamma |s = {\theta _j}} \right)}}{{\sum\limits_{j = 1}^q {p\left( {\gamma ,s = {\theta _j}} \right)} }}} }d\gamma,
	\end{split}
	\end{equation}  
	where $p\left( {s = {\theta _j}} \right)$ denotes the \textit{a prior} probability of the symbol $s$, $p\left( {\gamma | s} \right)$ denotes the conditional \textit{a posterior} probability of LLR $\gamma$, and $p\left( {\gamma , s} \right)$ denotes the joint \textit{a posterior} probability of LLR $\gamma$ and symbol $s$.
	
	Specifically, if the LLR $\gamma$ follows the Gaussian distribution with mean of ${\sigma ^2}/2$ and variance of ${\sigma ^2}$, then (\ref{equ:Iga}) can be expressed through the approximate curve fitting function $J\left(\cdot \right)$, which can be given as follows
	\begin{equation}
	J({\sigma})\!\approx\!\left\{\begin{array}{l}\!{a_{J,1}}\sigma^3\!+\!{b_{J,1}}\sigma^2\!+\!{e_{J,1}}\sigma^{},\quad\!0\!\le\!\sigma\!\le\!\sigma^*,\\ 
	\!1\!-\!{{e}^{({a_{J,2}}\sigma^3\!+\!{b_{J,2}}\sigma^2\!+\!{e_{J,2}}\sigma\!+\!{d_{J,2}})}},\!\sigma^*\!<\!\sigma\!<\!10,\\
	\!1,\quad\!\sigma\!\ge\!10,
	\end{array}\right.
	\end{equation}	
	\textup{in which}
	\begin{equation}
	\begin{aligned}
	\begin{array}{*{20}{l}}
	{\sigma^*\!=\!1.6363,\!}&{{a_{J,1}}\!=\!-\!0.0421061,\!}&{{b_{J,1}}\!=\!0.209252,}\\
	{{e_{J,1}}\!=\!-\!0.00640081,\!}&{{a_{J,2}}\!=\!0.00181491,\!}&{{b_{J,2}}\!=\!-\!0.142675,}\\
	{{e_{J,2}}\!=\!-\!0.0822054,\!}&{{d_{J,2}}\!=\!0.0549608.}&{}
	\end{array}
	\end{aligned}
	\end{equation}
	
	Furthermore, the inverse function of $J\left( \cdot\right)$ is represented as
	\begin{equation}
	{J^{-1}}(I_\sigma)\!\approx\!\left\{\begin{array}{l}
	\!{a_{\sigma ,1}}{I_\sigma^2}\!+\!{b_{\sigma ,1}}I_\sigma\!+\!{e_{\sigma,1}}\!\sqrt{\!I_\sigma},\quad 0\!\le\!I_\sigma\!\le\!{I^*_\sigma},\\ 
	\!-{a_{\sigma ,2}}\ln\left[{{b_{\sigma,2}}(1\!-\!I_\sigma)}\right]\!-\!{e_{\sigma,2}}I_\sigma,\quad {I^*_\sigma}\!<\!I_\sigma\!<\!1,
	\end{array}\right.
	\end{equation}
	where
	\begin{equation}
	\begin{aligned}
	\begin{array}{*{20}{l}}
	{\!I^*_\sigma\!=\!0.3646,\!}&{{a_{\sigma ,1}}\!=\!1.09542,\!}&{{b_{\sigma ,1}}\!=\!0.214217,}\\
	{\!{e_{\sigma ,1}}\!=\!2.33727,\!}&{{a_{\sigma ,2}}\!=\!0.706692,\!}&{{b_{\sigma ,2}}\!=\!0.386013,}\\
	{\!{e_{\sigma ,2}}\!=\!-\!1.75017.}&{}&{}
	\end{array}
	\end{aligned}
	\end{equation}	
	
	\section{Detailed Derivation of Right and Wrong Decision Probabilities}
	We show the details of the derivation of the equation of (\ref{equ:pilR}.b) and (\ref{equ:pilW}.b) in this appendix. 
	
	Firstly transform the first exponential term in equation (\ref{equ:pilR}.a) as $\frac{{{{\left( {\psi _i^l - \mathit{var_{\psi_{i}^l}}/2} \right)}^2}}}{{2\mathit{var_{\psi_{i}^l}}}} =  {t^2}$, thus we have $t = \frac{{\left( {\mathit{var_{\psi_{i}^l}}/2-\psi _i^l } \right)}}{{\sqrt {2\mathit{var_{\psi_{i}^l}}} }}$, then we have
	\begin{equation}\label{equ:R1}
	\begin{split}
	&\frac{1}{2{\sqrt {2\pi \mathit{var_{\psi_i^l}}} }}\int_0^\infty  {\exp \left( { - \frac{{{{\left( {\psi _i^l - \mathit{var_{\psi_i^l}}/2} \right)}^2}}}{{2\mathit{var_{\psi_i^l}}}}} \right)} d\psi _i^l\\
	&= \frac{1}{2{\sqrt {2\pi \mathit{var_{\psi_i^l}}} }} \times \sqrt {2\mathit{var_{\psi_i^l}}} \int_{ - \infty }^{\sqrt {{{\mathit{var_{\psi_i^l}}}}/{8}} } {\exp \left( { - {t^2}} \right)} dt\\
	&= \frac{1}{4}\left( {\frac{2}{{\sqrt \pi  }}\int_{0}^{\sqrt {{{\mathit{var_{\psi_i^l}}}}/{8}} }  {\exp \left( { - {t^2}} \right)} dt} +{\frac{2}{{\sqrt \pi  }}\int_{-\infty}^{0}  {\exp \left( { - {t^2}} \right)} dt}\right)\\
	&\overset{(c)}{=} \frac{1}{4}\left( {\frac{2}{{\sqrt \pi  }}\int_{0}^{\sqrt {{{\mathit{var_{\psi_i^l}}}}/{8}} }  {\exp \left( { - {t^2}} \right)} dt} +{\frac{2}{{\sqrt \pi  }}\int_{0}^{\infty}  {\exp \left( { - {t^2}} \right)} dt}\right)\\
	&= \frac{1}{4}\left[1 + \mathit{erf}\left( {\sqrt {\frac{{\mathit{var_{\psi_i^l}}}}{8}} } \right)\right],
	\end{split}
	\end{equation}
	where $d\psi _i^l = -\sqrt {2\mathit{var_{\psi_i^l}}} dt$, and the sub-equation (\ref{equ:R1}.c) holds because $exp(-t^2)$ is an even function. 
	
	Similarly, the second term of equation (\ref{equ:pilR}.a) can be given as
	\begin{equation}\label{equ:R2}
	\begin{split}
	&\frac{1}{2{\sqrt {2\pi \mathit{var_{\psi_i^l}}} }}\int_{ - \infty }^0 {\exp \left( { - \frac{{{{\left( {\psi _i^l + \mathit{var_{\psi_i^l}}/2} \right)}^2}}}{{2\mathit{var_{\psi_i^l}}}}} \right)} d\psi _i^l\\
	&= \frac{2}{4{\sqrt {2\pi \mathit{var_{\psi_i^l}}} }} \times \sqrt {2\mathit{var_{\psi_i^l}}} \int_{ - \infty }^{\sqrt {\mathit{var_{\psi_i^l}}/8} } {\exp \left( { - {t^2}} \right)} dt\\
	&= \frac{1}{4}\left[1 + \mathit{erf}\left( {\sqrt {\frac{{\mathit{var_{\psi_i^l}}}}{8}} } \right)\right],
	\end{split}
	\end{equation}
	where $t = \frac{{\left( {\psi _i^l+\mathit{var_{\psi_i^l}}/2} \right)}}{{\sqrt {2\mathit{var_{\psi_i^l}}} }}$, and $d\psi _i^l = \sqrt {2\mathit{var_{\psi_i^l}}} dt$.
	
	Follow the same principle of (\ref{equ:R1}) and (\ref{equ:R2}), we can deduct the terms of (\ref{equ:pilW}.a) as 
	\begin{equation}\label{equ:w1}
	\begin{split}
	&\frac{1}{2{\sqrt {2\pi \mathit{var_{\psi_i^l}}} }}\int_0^\infty  {\exp \left( { - \frac{{{{\left( {\psi _i^l + \mathit{var_{\psi_i^l}}/2} \right)}^2}}}{{2\mathit{var_{\psi_i^l}}}}} \right)} d\psi _i^l\\
	&= \frac{2}{4{\sqrt \pi  }}\int_{\sqrt {\frac{{\mathit{var_{\psi_i^l}}}}{8}} }^\infty  {\exp \left( { - {t^2}} \right)} dt\\
	&= \frac{1}{4}\mathit{erfc}\left( {\sqrt {\frac{{\mathit{var_{\psi_i^l}}}}{8}} } \right),
	\end{split}
	\end{equation}
	with $t = \frac{{\left( {\psi _i^l+\mathit{var_{\psi_i^l}}/2} \right)}}{{\sqrt {2\mathit{var_{\psi_i^l}}} }}$, and
	\begin{equation}\label{equ:w2}
	\begin{split}
	&\frac{1}{2{\sqrt {2\pi \mathit{var_{\psi_i^l}}} }}\int_{ - \infty }^0 {\exp \left( { - \frac{{{{\left( {\psi _i^l - \mathit{var_{\psi_i^l}}/2} \right)}^2}}}{{2\mathit{var_{\psi_i^l}}}}} \right)} d\psi _i^l\\
	&= \frac{2}{4{\sqrt \pi  }}\int_{\sqrt {\frac{{\mathit{var_{\psi_i^l}}}}{8}} }^\infty  {\exp \left( { - {t^2}} \right)} dt\\
	&= \frac{1}{4}\mathit{erfc}\left( {\sqrt {\frac{{\mathit{var_{\psi_i^l}}}}{8}} } \right),
	\end{split}
	\end{equation}
	with $t = \frac{{\left( {\mathit{var_{\psi_i^l}}/2-\psi _i^l} \right)}}{{\sqrt {2\mathit{var_{\psi_i^l}}} }}$.
\end{appendices}

\end{document}